\newcommand\cyr{%
 \renewcommand\rmdefault{wncyr}%
 \renewcommand\sfdefault{wncyss}%
 \renewcommand\encodingdefault{OT2}%
\normalfont\selectfont} \DeclareTextFontCommand{\textcyr}{\cyr}
\newtheorem{theorem}{Theorem}
\newtheorem{lemma}[theorem]{Lemma}
\newtheorem{definition}[theorem]{Definition}
\begin{document}

\title{\textbf{On the Complexity of Asynchronous Agreement Against Powerful Adversaries}}

\author{Allison Lewko \\ Microsoft Research \and Mark Lewko \\ UCLA}

\date{}

\maketitle

\begin{abstract}
We introduce new techniques for proving lower bounds on the running time of randomized algorithms for asynchronous agreement against powerful adversaries. In particular, we define a \emph{strongly adaptive adversary} that is computationally unbounded and has a limited ability to corrupt a dynamic subset of processors by erasing their memories. We demonstrate that the randomized agreement algorithms designed by Ben-Or and Bracha to tolerate crash or Byzantine failures in the asynchronous setting extend to defeat a strongly adaptive adversary. These algorithms have essentially perfect correctness and termination, but at the expense of exponential running time. In the case of the strongly adaptive adversary, we show that this dismally slow running time is \emph{inherent}: we prove that any algorithm with essentially perfect correctness and termination against the strongly adaptive adversary must have exponential running time. We additionally interpret this result as yielding an enhanced understanding of the tools needed to simultaneously achieving perfect correctness and termination as well as fast running time for randomized algorithms tolerating crash or Byzantine failures.
\end{abstract}

\section{Introduction}
Achieving agreement in a distributed system despite failures is a central problem in distributed computing. We consider a complete network of $n$ processors able to communicate with each other by passing messages. Initially, each processor has an input bit. The task is to design a failure-resilient protocol that allows all non-faulty processors to agree on an output value, with the restriction that it must be equal to at least one of their inputs (this rules out the trivial solution of having a constant decision value independent of the inputs). The difficulty of this problem depends heavily on several additional specifications that must be made. In particular, is communication synchronous or asynchronous? What kinds of failures should be tolerated? If the errors and/or scheduling are controlled by an adversary, what resources and information does the adversary have access to?

We will consider a very challenging setting of asynchronous communication where message scheduling is controlled by an adversary with unbounded computational power who is given unrestricted access to all message contents and internal states of all processors. The adversary will also be empowered to cause limited types and quantities of processor failures. In this work, we will consider two kinds of failures: crash failures, which cause a processor to quit without warning, as well as resetting failures, which we will define and motivate below.

In this setting, the elegant result of Fischer, Lynch, and Paterson \cite{FLP83} shows that it is already impossible to design a deterministic protocol for agreement that always terminates, even if the adversary is limited to causing at most \emph{one} crash failure. A common approach for tolerating this obstacle in practice is to use an algorithm that terminates as long as worst-case scheduling does not occur indefinitely. This is a property achieved by the well-known Paxos algorithm constructed by Lamport \cite{L98}.
Randomized algorithms provide a potential alternative. Quickly following the impossibility result, Ben-Or \cite{BO83} and Bracha \cite{B84} presented randomized algorithms terminating with probability one, even against such strong adversaries. These algorithms were intuitively structured, and Bracha's algorithm tolerated an optimal number of failures, namely allowing for $t$ processors to behave in an arbitrary malicious fashion, for any $t < \frac{n}{3}$. Also, Aguilera and Toueg \cite{AT12} have provided a new correctness proof for Ben-Or's randomized consensus algorithm when there are $< \frac{n}{2}$ crash failures.
However, for some settings of the initial input bits, the algorithms of \cite{BO83,B84} run for time that is exponential in $n$ (with high probability) when $t = \Omega (n)$.

The algorithms in \cite{BO83,B84} seem to provide even stronger failure resilience than is captured by the adversarial model employed. In particular, the original proofs of correctness rely only on the fact that at most $t$ processors are faulty \emph{at one time}, where the notion of ``time" must be defined in an appropriate (and perhaps subtle) way. This gives some hope for recovering from even more than $t < \frac{n}{3}$ failures over the course of long executions if individual processor faults are fleeting occurrences. In particular, one might suppose that faulty processors could be detected and fixed during the course of a protocol execution, thereby allowing for more total failures.

In order to more fully characterize the failure resilience provided by the basic algorithm underlying \cite{BO83,B84}, we define the notion of \emph{resetting failures}. A resetting failure at a processor results in loss of internal state: a processor that is reset is assumed to lose the entire contents of its memory (except for its initial input bit and its output bit). A resetting failure can model a processor that is detected to be faulty and has its memory reset in order for it to rejoin the protocol as a non-faulty processor. We define a strongly adaptive adversary who can cause up to $t$ resetting failures in a certain window of time, where our measure of time is appropriately linked to the events of the execution. (Some kind of linking is necessary to avoid allowing the adversary to always cause a failure at the processor currently taking a step in the execution, for example.)

We prove that a simple variant of the algorithms in \cite{BO83,B84} is indeed successful against such a strongly adaptive adversary (with probability one). Of course, this retains the exponentially slow running time. We then show that exponential slowness for $t  = \Omega(n)$ is inherent to any algorithm achieving success with probability one against this strongly adaptive adversary. This provides a rather complete understanding of what is achievable in the presence of adaptive resetting faults.

In contrast, the relatively recent algorithm of Kapron et. al. \cite{K08} runs very quickly (polylogarithmic time in $n$) and tolerates $t  < \left(\frac{1}{3} - \epsilon \right)$ non-adaptive Byzantine failures, but incurs a non-zero probability of non-termination or termination with invalid outputs. It is an interesting question to study to what extent the sacrifices made here (non-adaptivity, non-zero probability of incorrect output) are necessary to achieve fast running time. The algorithm in \cite{K08} works by iteratively dividing the processors into small ``committees" that can afford to run the slow algorithm of \cite{B84} to hold elections to select random smaller subsets of processors to continue into new committees. A single final committee is reached that, with $1 - o(1)$ probability, contains a suitably bounded percentage of faulty processors. This final committee runs the algorithm of \cite{B84} and informs the other processors of the result.

It is clear that this approach cannot be used against an adaptive adversary, who can simply wait for the final committee to be determined and then cause faults. This approach also seems to inherently incur non-zero probability of an invalid result, as there is always a nonzero chance that the final committee is composed entirely of faulty processors. With the goal of beginning a systematic study of what can be achieved without incurring these disadvantages, Lewko \cite{L11} previously proved that a class of algorithms generalizing Ben-Or and Bracha's algorithms in \cite{BO83,B84} cannot achieve subexponential running time against an adversary causing $t = \Omega(n)$ non-adaptive Byzantine failures. The class of algorithms was restricted in several ways, including a constant bound on the support size of all message distributions sampled by processors and a requirement for received messages from different processors to be treated symmetrically. (For a more detailed description of the algorithm class, see \cite{L11}.)

The techniques we introduce to prove the lower bound against strongly adaptive adversaries can be applied in this setting to yield an exponential lower bound on running time for a new class of algorithms tolerating $t = \Omega(n)$ crash failures. This class is incomparable to the class considered in \cite{L11}, and this result yields several new insights. Most notably, our lower bound technique can tolerate arbitrary use of randomness by the processors, allowing us to avoid requiring any restriction on the support size as in \cite{L11}. We also avoid any requirement of symmetry in how received messages are treated, and our class is more intuitively defined.

Concurrently with this work, King and Saia \cite{KS13} have discovered a Las Vegas polynomial-time algorithm tolerating adaptive Byzantine faults that falls outside the classes of algorithms considered here and in \cite{L11}. This implies a separation between what can achieved against the classical adaptive Byzantine adversary and the strongly adaptive adversary.

\paragraph{Our Techniques} To prove the exponential lower bound on running time, we rely crucially on a general probabilistic inequality of Talagrand \cite{T}, which roughly states that any product distribution $\Omega_1 \times \Omega_2 \times \ldots \times \Omega_n$ cannot put too much weight simultaneously on two sets $A$ and $B$ in $n$-dimensional space that are ``far apart." For our purposes, ``far apart" can be interpreted as having Hamming distance $\Omega(n)$. We also can interpolate this result: if some product distribution $\Omega_1 \times \ldots \times \Omega_n$ puts significant weight on $A$ and some other product distribution $\Pi_1 \times \ldots \times \Pi_n$ puts significant weight on $B$, then there is some mixed product distribution $\Omega_1 \times \ldots \times \Omega_i \times \Pi_{i+1} \times \ldots \times \Pi_n$ that puts small weight on each of $A$ and $B$.

To use these tools in order to prove a lower bound on running time, we define iterative pairs of sets in the joint state space of the $n$ processors that represent different levels of progress towards a final decision. By leveraging the capabilities of the strongly adaptive adversary (or later by leveraging the defining properties of the algorithm class), we prove that each of these pairs of sets is sufficiently separated in Hamming distance.
We then apply the probabilistic inequalities repeatedly as an execution travels through the state space of the $n$ processors, showing that for some initial setting of the inputs, the adversary can prevent the algorithm from making much progress in a given window of time with high probability. This ultimately yields our lower bound.

Our approach of leveraging general properties of product distributions in this iterative fashion represents a meaningful expansion of the suite of available tools for proving lower bounds in a distributed setting. In particular, there are essentially only a few core tools for proving lower bounds for randomized algorithms, and previous approaches do not achieve exponential lower bounds on running time when arbitrary amounts of randomness can be used. We consider our new techniques to be the main contribution of this work.
In the following subsection, we briefly survey prior lower bounds and other relevant work.

\subsection{Related Work}
The problem of reaching agreement despite faults was introduced by Pease, Shostak, and Lamport in \cite{PSL80}, who also proposed the Byzantine failure model in \cite{PSL82}. Since its introduction, the problem of fault-tolerant agreement has been widely studied in a variety of models. Several works have considered computationally bounded adversaries, a setting in which cryptographic tools can be employed (\cite{R83,T84,BG93,CR93,CKS05,N02}, for example).
In the synchronous communication setting, polylogarithmic round randomized protocols for Byzantine agreement against non-adaptive adversaries were obtained in \cite{KSSV06a,KSSV06b,BPV06,GPV06}. Recent work has focused on reducing the communication overhead of synchronous protocols \cite{KS09,KS10}.
 
Several lower bounds are also known. In addition to the impossibility of deterministic algorithms in the asynchronous setting mentioned above, there is a sharp lower bound of $t$ rounds for deterministic algorithms in the synchronous setting \cite{DS82}. This lower bound is proven by assembling a chain of executions where any two adjacent executions are indistinguishable to some non-faulty processor and the two ends of the chain represent different decision values. This basic strategy is adapted and expanded in \cite{L11} to yield a lower bound for a class of randomized algorithms, but this class inherently limits the amount of randomness used in choosing an individual message.

Polynomial lower bounds for randomized algorithms include the result of Bar-Joseph and Ben-Or \cite{BB98}, which proves a lower bound of $t/\sqrt{n\log n}$ on the number of expected rounds for a randomized synchronous protocol against an adversary who can adaptively choose to fail $t$ processors. 
Interestingly, their proof employs Schectman's theorem to analyze one-round coin flipping games, similar to our core use of Talagrand's inequality. 
However, they employ different techniques to build their analysis of multiple rounds. 
Another lower bound is due to Attiya and Censor \cite{AC08}, who show that for any integer $k$, the probability that a randomized Byzantine agreement algorithm does not terminate in $k(n-t)$ steps is at least $1/c^k$ for some constant $c$. Their technique involves constructing a chain of indistinguishable executions and bounding the termination probability in terms of the length of the chain. 
Aspnes \cite{A97} proves a lower bound of $\Omega (t/\log^2 t)$ on the expected number of local coin flips for asynchronous algorithms against adaptive adversaries that holds in either the shared memory or message passing model. This result is proven by establishing an extension of the techniques in \cite{FLP83} to a randomized setting. In the shared memory model, there are polynomial time randomized algorithms tolerating crash failures, and tight bounds on their total step complexity are proven by Attiya and Censor in \cite{AC08a}. This work also uses an analysis of product probability spaces, similarly to the proof in \cite{BB98}.

\section{Models and Definitions}
We let $n$ denote the total number of processors, and consider each processor to be endowed with a unique identity between 1 and $n$. We let $ 0 < t < n$ be a fixed positive integer (we let $t$ be arbitrary for the purposes of definition, but note that in our theorems below, we will take $t = cn$ for a suitably small positive constant $c$). We assume that each processor has its own source of random bits, and all of these sources are unbiased and independent. Each processor also has a fixed input bit, and a write-once output bit that is initially set to $\bot$. We work in a message-passing model, where any single processor can send a message to any other processor along a dedicated ``message channel," meaning that the recipient of a message will always correctly identify the sender.
We let $\mathcal{M}$ denote the space of all possible messages (this can be infinite). An element $m \in \mathcal{M}$ contains a sender identity, a receiver identity, and a string of bits interpreted as its contents.

We define the \emph{state} of a processor to include the current contents of its memory (note that this holds its identity, its input bit, and its output bit with current value $0$, $1$, or $\bot$). We let $\Sigma$ denote the set of possible processor states. An $n$-tuple of states, $\sigma \in \Sigma^n$ specifies a \emph{configuration} for the $n$ processors.

An \emph{algorithm} $\mathcal{A}$ is a collection of probability distributions on $\Sigma \times \mathcal{M}^n$, parameterized by $\Sigma \times \mathcal{M}$. In other words, an algorithm specifies how a processor should sample a new state and outgoing messages, depending on the current state as well as a just received message. The new state may contain updated memory contents (the output bit may or may not change). The new sent messages can depend on the freshly received message, the current memory of the processor, and freshly sampled random bits. We include $\emptyset \in \mathcal{M}$ to allow a processor to choose not to send a message.

We will adopt the usual notion of asynchrony and imagine that message delivery is controlled by an adversary. We will allow our adversary complete access to the current states of all the processors and the contents of all messages. We also allow our adversary unbounded computational power.

It is typical to define an execution as a sequence of steps, where each step consists of a processor (potentially) receiving a message, performing some local computation, and then possibly placing some outgoing messages into a ``message buffer." The adversary then controls the sequence of steps by deciding which processor will take the next step and what message (if any) that processor will receive.

To model an adversary able to crash up to $t$ processors, one can insist that in any infinite execution, all but at most $t$ processors take infinitely many steps and that every message sent to an infinitely stepping processor is eventually delivered. It is also common to consider a stronger Byzantine adversary, who instead has the power to corrupt the messages sent by up to $t$ processors. In this setting, we may require all processors to take infinitely many steps - but note that corrupted processors may simulate crashed processors by maliciously choosing not to send messages (changing a non-empty message $m$ to $\emptyset$ is considered a permissible corruption by the adversary). We note that corruption of messages allows an adversary to lie about the random coins sampled by a limited number of processors.

We will instead define a adversary who is able to ``reset" a \emph{changing} set of $\leq t$ processors. \emph{Resetting} a processor will correspond to erasing the contents of its memory, except for its input bit, its output bit, its processor identity, and a special counter that will increment each time a reset occurs. We assume that a processor keeps a local copy of the counter's value in its state, and hence will detect a reset when the local copy is erased and the real counter is non-zero. This mechanism of detection is just a book-keeping device, the key point is that we are assuming resets are events processors can internally detect (note that this strengthens our lower bound result).


We now consider executions expressed as sequences of more fine-grained steps between configurations, where we allow three distinct types of steps. A \emph{resetting step} will cause the memory of a specified processor to be reset. A \emph{receiving step} will deliver a message from the message buffer to its intended recipient. The recipient will then perform a local computation (perhaps sampling from some fresh local randomness). This will be the only kind of step that involves randomization.

Finally, a \emph{sending step} will allow a processor to place a set of new messages into the message buffer (this set may be empty if the processor chooses not to send anything). We assume that a single sending step represents a complete response to prior events, meaning that if a processor takes a sending step and then takes another sending step without taking any resetting or receiving steps in between, then the second sending step will have no effect - the state of the processor will remain unchanged and no new messages will be sent.
The adversary will control the order and nature of the steps.

Given a partial execution expressed as a finite sequence of such steps, we define its probability (with respect to a fixed algorithm $\mathcal{A}$) to be the product of the probabilities of each state change induced by a step, under the distributions specified by the algorithm. (This is assuming the initial configuration is valid.) Note that this will be zero if deterministic transitions are not followed, or if a step indicates delivery of a message that was never sent, etc.

Naturally, it would be impossible to make progress against an adversary allowed to reset processors arbitrarily. In particular, we must design a model that rules out the trivial case of an adversary that resets the memory of the receiving processor after every message delivery, as no algorithm can make progress under such adverse circumstances.
To ensure feasibility, we could limit the adversary to resetting at most $t$ processors throughout an execution. However, we can achieve success against even stronger adversaries. To specify an interesting such adversary, we make one key definition:

\begin{definition} An \textbf{acceptable window} is a consecutive segment of steps of the following form. First, all $n$ processors take sending steps. Then, for sets $S_1, S_2, \ldots, S_n \subseteq [n]$ all of size $\geq n-t$, a sequence of receiving steps follows that delivers to each processor $i$ the messages just sent to it from processors in the set $S_i$. Finally, a sequence of at most $t$ resetting steps occurs.
\end{definition}

The notion of an acceptable window is a formal unit of ``time" during an execution in which at most $t$ processors are faulty, and hence the other processors may not receive any messages from them. One could imagine adding a requirement that $i \in S_i$ for each $i$, as a processor can always safely wait to receive a message from itself, but this is unnecessary, as no resets occur between sending and receiving. This means that any information the processor could pass to itself through a message can instead be stored directly in the processor's state. Thus, adding a requirement that $i \in S_i$ would be superfluous here.

We define the \emph{Strongly Adaptive Adversary} to be an adversary allowed to reset processors and control message sending and receiving up to the constraint that any infinite execution is composed entirely of adjacent, disjoint acceptable windows. We observe that this adversary is incomparable to the usual Byzantine asynchronous adversary. Our strongly adaptive adversary has the additional power to erase processor memory, but it lacks the power to have corrupted processors ``lie" about their local random bits.

Our use of the phrase ``the strongly adaptive adversary" is a bit imprecise, since this technically constitutes a class of adversaries in the following sense. A single adversary should be thought of as a deterministic function that maps a partial execution to a next applicable step. Such a function need not be efficiently computable. We will consider the class of strongly adaptive adversaries to be the collection of individual adversaries that satisfy the requirement above to produce acceptable windows.

It may seem a bit unnatural to impose the constraint that an adversary should stick to acceptable windows, but we feel this captures the intuitive notion that the adversary should only corrupt $t$ processors ``at one time," as otherwise progress would be impossible. Furthermore, as our main result in this model is a lower bound, placing restrictions on the adversary strengthens our result.

We call a configuration \emph{reachable} if it occurs as the consequence of some partial execution with non-zero probability that is decomposable as a concatenation of acceptable windows. Note that the notion of reachability depends on the algorithm employed.

\begin{definition}
We say an algorithm $\mathcal{A}$ achieves \textbf{measure one correctness} against all strongly adaptive adversaries if any reachable configuration contains only agreeing or $\bot$ output bits, (in other words, one output bit being 0 and another being 1 is disallowed, but any assortment of 0's and $\bot$'s or any assortment of $1$'s and $\bot$'s is allowed). We also require that when an output is not $\bot$, it must agree with one of the inputs. This means that if all processors have inputs equal to 0, the decision cannot be 1, and vice versa.
\end{definition}

\begin{definition}
We say an algorithm $\mathcal{A}$ achieves \textbf{measure one termination} if any infinite execution (composed of acceptable windows) in which some processor taking an infinite sequence of sending and receiving steps never sets its decision bit has probability zero (we define the probability of an infinite execution to be the limit of the probabilities of its finite partial executions).
\end{definition}

In an asynchronous setting, defining the running time of an execution can be a bit subtle. One typical measure is to consider the length of the longest message chain before a decision is reached, where a message chain includes messages $m_1, m_2, \ldots, m_k$ such that $m_i$ is received by the sender of $m_{i+1}$, at some point prior to the sending of $m_{i+1}$.
It is not immediately clear how a ``message chain" should be defined in the presence of resetting faults: should a message sent after a reset be counted as continuing a chain of messages received before the reset?
Since our strongly adaptive adversary is constrained to keep to schedules that are approximately synchronous, we will employ a more obvious measure, namely the number of acceptable windows that pass before the first processor decides. When we later reformulate our techniques to obtain a lower bound for a class of algorithms in the presence of crash failures, we will define the running time of an execution as the length of the longest message chain preceding a decision.

\section{Feasibility Against the Strongly Adaptive Adversary}

Both Ben-Or \cite{BO83} and Bracha \cite{B84} provide expected exponential time algorithms for Byzantine agreement against a full-information asynchronous adversary (terminating and succeeding with probability 1). Bracha's algorithm introduces a bit more complexity in order to achieve the optimal resilience of $t < \frac{n}{3}$ in the Byzantine setting.

Inspired by these algorithms, we provide a close variant that succeeds against the strongly adaptive adversary. We will not be concerned with obtaining the optimal resilience, and so will favor simplicity of presentation over possible improvements to the constant fraction of resets allowed per acceptable window. The algorithm is parameterized by several thresholds, $T_1\geq T_2 \geq T_3$, and we will discuss appropriate settings of these below.

Throughout the algorithm, each processor $p$ will store its input bit, its (write-once) output bit, and a few additional variables. The variable $r_p$ will hold the current ``round number" and is initialized to 1. The variable $x_p$ is initialized to be equal to the input bit of processor $p$.

\paragraph{step 1:} Send the message $(r_p,x_p)$ to all processors.

\paragraph{step 2:} Wait until $T_1$ messages of type $(r_q, x_q)$ have arrived from other processors with values of $r_q = r_p$.

\paragraph{step 3:} If at least $T_2$ of these $T_1$ messages have the same bit value $v$ for the last entry, then write $v$ to the output bit (assuming this bit is not yet written). If at least $T_3$ of these $T_1$ messages have the same bit $v$, then set $x_p = v$. Otherwise, set $x_p$ to be a freshly sampled uniformly random bit.

\paragraph{step 4:} Set $r_p =r_p +1$ and return to step 1.

\paragraph{handling resets}
To address resets, a processor $p$ also does the following. If $p$ has just been reset (an event that is detectable), then processor $p$
waits to receive at least $T_1$ messages $(r_q, x_q)$ from other processors with a common value of $r$. It then sets its own $r_p$ value to the match this and returns to step 3 above (note that a newly reset processor refrains from sending messages until it resumes normal operation).

We note that $2T_3 > T_1$ should hold in order for the behavior in step 3 to be clear. This is a constraint we will always impose.

\begin{theorem}The above algorithm achieves measure one correctness and termination against the strongly adaptive adversary for $t < \frac{n}{6}$ when the thresholds $T_1, T_2,  T_3$ are set to satisfy $n-2t \geq T_1 \geq T_2 \geq T_3 +t$, and $2T_3 > n$.
\end{theorem}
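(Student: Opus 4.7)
The plan is to establish validity, agreement, and termination (each with probability~$1$) separately.  Validity is direct: if every input equals $v$, then in round~$1$ every non-catching-up processor sends $(1,v)$, and any processor collecting its $T_1$ round-$1$ messages sees only $v$'s; since $T_2\le T_1$, the output bit is written to $v$ in step~$3$.  An easy induction on rounds shows that $x_p$ never flips away from $v$.

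The main content is the agreement argument.  Suppose processor~$p$ first writes $v$ to its output bit in round~$r$.  Let $A$ (resp.\ $B$) be the set of processors that sent a round-$r$ message of value~$v$ (resp.\ $\bar v$), and let $s\le t$ count the processors silent during that window's sending step (catching up from a reset at the end of the previous window).  Then $|A|\ge T_2$ and $|A|+|B|=n-s$, so $|B|\le n-s-T_2\le n-T_2<T_3$, using $T_2\ge T_3+t$ and $2T_3>n$.  This rules out any processor seeing $T_3$ messages of value~$\bar v$ in round~$r$, so no one decides $\bar v$ or sets $x=\bar v$ in round~$r$.  For any processor~$p'$ completing round~$r$, the received set $W_{p'}\subseteq A\cup B$ has $|W_{p'}|\ge (n-t)-s$ (since $|S_{p'}|\ge n-t$ and at most $s$ entries are silent), whence
\[ |W_{p'}\cap A|\ge |W_{p'}|-|B|\ge (n-t-s)-(n-s-T_2)=T_2-t\ge T_3, \]
forcing $p'$ to set $x_{p'}=v$.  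Thus in round~$r+1$ every non-catching-up processor sends $(r+1,v)$, giving $|A_{r+1}|\ge n-t$ and $|B_{r+1}|=0$; any processor completing round~$r+1$ sees $|W|\ge n-2t\ge T_1\ge T_2$ messages of $v$ and decides $v$, and catching-up processors receive only $v$ messages and also decide $v$.  Inductively, no processor ever decides $\bar v$.

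For termination, note that $2T_3>n$ forbids both $v$ and $\bar v$ from being deterministically locked in by step~$3$ in any single round (that would require $|A|+|B|\ge 2T_3>n$), so in each round all processors whose step~$3$ does not deterministically fix $x$ sample fresh independent random bits.  With probability at least $2^{-n}$, all such random bits agree with whatever single value (if any) the deterministic side is setting.  When this event occurs in some round~$r$, all processors begin round~$r+1$ with a common $x$ value, and the agreement analysis above forces a decision in round~$r+1$.  Since this conditional probability is at least $2^{-n}$ in every round in which no decision has yet occurred, a Borel--Cantelli argument gives termination with probability~$1$.  The main technical obstacle is the bookkeeping in the agreement step: the bound $|W_{p'}\cap A|\ge T_2-t$ requires tracking the $\le t$ silent (catching-up) senders via the identity $|A|+|B|=n-s$ rather than $|A|+|B|\le n$, and the slack $T_2\ge T_3+t$ is precisely calibrated to absorb the adversary's $t$ resets per window.
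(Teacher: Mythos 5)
Your proof is correct and follows essentially the same approach as the paper's: validity by direct inspection of round~1, agreement by looking at the earliest deciding window and using the window structure to force every processor to set $x=v$ and hence decide in the next window, and termination by the $2^{-n}$ per-window probability that all freshly sampled bits match the (necessarily non-conflicting) deterministic side. Your version is a bit more explicit in its bookkeeping -- the sets $A$, $B$, and the silent count $s$ make the counting $|W_{p'}\cap A|\ge T_2-t\ge T_3$ transparent, whereas the paper states the bound $T_2-t$ directly -- but the underlying argument and the way the threshold constraints are used are the same.
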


\begin{proof} We note that whenever $t < \frac{n}{6}$, the constraints on the thresholds $T_1, T_2, T_3$ specified above are achievable by setting $T_1 := n-2t$, $T_2 = T_1$, $T_3 = n-3t$. (Having a smaller value of $t$ allows one to set $T_2$ smaller than $T_1$, which will lead to improvement in running time but is not relevant for measure one correctness and termination.)

We first establish measure one correctness. Suppose that $\sigma$ is a reachable configuration in which some processor has decided. We consider a non-zero probability partial execution composed of acceptable windows leading to $\sigma$ (such a partial execution must exist by definition of reachability). Now, we let $w$ denote the earliest acceptable window in which a decision is made, and we let $p$ denote a processor deciding in window $w$. In order to decide on a bit $v$, $p$ must have received $\geq T_2$ messages of the form $(r, v)$ for its current value of $r_p$.
Each other processor $q$ must have received $\geq T_2 - t$ of these messages $(r,v)$ during the receiving steps in window $w$.

We now consider how the internal round numbers $r$ maintained by the processors evolve during a sequence of acceptable windows. Initially, all round numbers are 1. Assuming that $n-t \geq T_1$, all processors will increment $r$ to be 2 during the first acceptable window. Thus, entering the second acceptable window, all processors that were \emph{not reset} during the first window will have $r =2$, and the reset processors will have blank $r$ values (denoted by $\bot$). During the second acceptable window, each processor will receive at least $n-2t$ messages from processors with $r$ values equal to 2. Assuming that $T_1 \leq n-2t$, every processor will then have $r =3$ before the resetting steps.

Extending this reasoning inductively, we see that in window $w$, at least $n-t$ processors will enter the window with $r = w$ (with the rest having $r = \bot$). Again assuming that $T_1 \leq n - 2t$ and additionally assuming that $T_2 -t \geq T_3$, every processor will have $x_q =v$ and $r = w+1$ just before the resetting steps that conclude window $w$. It follows that every processor who has not yet decided will decide $v$ in window $w+1$. We must also check that it is impossible for some processor to decide the opposite of $v$ during window $w$. This is impossible as long as $2T_2 >n$.

We have thus shown that it is impossible to obtain contradicting decision values in a reachable configuration. To see that decision values conflicting with a unanimous setting of the inputs are also impossible, note that if all inputs are equal to a common value $v$, then all processors will decide $v$ in the first acceptable window. This completes our proof of measure one correctness.

To establish measure one termination, we first argue that during any given acceptable window, no two processors $p$ and $q$ can fix $x_p$ and $x_q$ deterministically to conflicting values. If $p$ deterministically sets $x_p$ to $v$, this means it received $\geq T_3$ messages with the value $v$. Processor $q$ could not have received $ \geq T_3$ messages with the opposite value if we impose the constraint that $2T_3 > n$. Assuming this constraint, no two processors can deterministically set conflicting values. Thus, there is at least a $2^{-n}$ probability that all processors $p$ set the same value for $x_p$ during any given window. Thus, the probability of not deciding approaches 0 as the number of acceptable windows approaches infinity. This implies measure one termination.
\end{proof}

We observe that for any constant $c < \frac{1}{6}$, setting $t = cn$ makes measure one correctness and termination attainable through the above algorithm, but the algorithm will incur exponential running time (with high probability) against an adversary that chooses initial inputs evenly split between 0 and 1. To see this, note that $T_3$ will always need to be $> \frac{1}{2}n$, and hence $T_2$ will always need to be $ > (\frac{1}{2} +c) n$. Decision will then be contingent on obtaining a strong majority that occurs with probability that is exponentially small (depending on $c$). This is a consequence of the simple fact that with high probability, a sampling of $n$ independent uniformly random bits will yield a deviation of only $\mathcal{O} (n^{\frac{1}{2}+\epsilon})$ from the mean (for any small $\epsilon >0$). Hence, with high probability per round, the adversary can continually extend the execution to last one more round without deciding by showing every processor an approximate split between 0 and 1 messages, and then having all of them set their next bits randomly in step 3. We expect this to continue for an exponential number of rounds until a strong enough majority happens by chance to prevent the adversary from continuing in this fashion.

\section{Impossibility of Expected Polynomial Time Against the Strongly Adaptive Adversary}\label{sec:lowerbound}
Here we establish our main result: an exponential lower bound on the running time for any algorithm achieving measure one correctness and termination against the strongly adaptive adversary.

\begin{theorem}\label{thm:main} We set $t = cn$ where $c>0$ is any fixed positive constant. Then there exist absolute positive constants $C,\alpha$ (depending only on $c$) such that, for any algorithm achieving measure one correctness and termination, there is a strongly adaptive adversary and a setting of the inputs bits such that with probability $\geq \frac{1}{2}$, the running time is $\geq C e^{\alpha n}$.
\end{theorem}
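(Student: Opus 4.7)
The plan is to argue by contradiction: assume some algorithm $\mathcal{A}$ achieves measure one correctness and termination and runs in fewer than $T = Ce^{\alpha n}$ windows with probability greater than $1/2$ under every strongly adaptive adversary and every input. We then construct a specific strongly adaptive adversary $\mathbb{A}$ and exhibit an input on which $\mathbb{A}$ keeps $\mathcal{A}$ undecided for at least $T$ windows with probability at least $1/2$. For the input we take the balanced vector, $n/2$ zeros and $n/2$ ones, so that both decision values remain valid under measure one correctness.

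The essential geometric fact is that the set $A \subseteq \Sigma^n$ of configurations in which some processor has already decided $0$ and the analogous set $B$ for decision $1$ are separated in Hamming distance by more than $t = cn$. If $\sigma \in A$ and $\tau \in B$ differed on at most $t$ positions, then starting from $\sigma$ the adversary could reset, in a single window, exactly the differing processors, producing a configuration that agrees with $\tau$ on every non-reset processor. Continuing from there with any adversary that reaches a decided extension of $\tau$ would produce a reachable configuration containing one processor (an original decider in $\sigma$) holding decision $0$ and another (a decider in the continuation from $\tau$) holding decision $1$, violating measure one correctness. The same reset-and-continue reasoning will extend to the ``progress'' sets $A_k, B_k$ defined below.

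To iterate, we define, for each $k \geq 0$, progress sets $A_k, B_k \subseteq \Sigma^n$ capturing configurations already committed to eventually deciding $0$ (resp.\ $1$) with high probability within $T - k$ further windows. The Hamming-separation argument extends to $A_k$ versus $B_k$ by the same reset-and-continue reasoning, with the constant in $\Omega(n)$ surviving the induction because the reset budget stays at $cn$ per window. Assuming fast termination, every reachable undecided configuration lies, up to a negligible-mass exception, in $A_k \cup B_k$, so in order to keep $\mathcal{A}$ from deciding the adversary must drive the execution into the complement $(A_k \cup B_k)^c$. We then apply the interpolation form of Talagrand's inequality described in the introduction: conditional on any partial execution, the distribution of $X_{k+1}$ over the random bits sampled in window $k+1$ is a product distribution on the $n$ processors. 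Interpolating between an adversarial continuation driving the configuration toward $A_{k+1}$ and one driving it toward $B_{k+1}$ (both of which exist whenever the current configuration is bivalent), the Talagrand bound yields an intermediate hybrid schedule that lands in $A_{k+1} \cup B_{k+1}$ with probability at most $e^{-\Omega(n)}$.

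Playing this intermediate hybrid in each window and taking a union bound across $T = Ce^{\alpha n}$ windows, the resulting strongly adaptive adversary keeps the configuration outside the committed region, and hence undecided, for every $k \leq T$ with probability at least $1 - T e^{-\Omega(n)}$, which exceeds $1/2$ once $\alpha$ is chosen below the constant in the Talagrand exponent. This contradicts the assumption of termination within $T$ windows. The main obstacle will be giving the progress sets a rigorous inductive definition for which both the Hamming separation and the Talagrand interpolation simultaneously go through; in particular, one must handle configurations on the boundary between $A_k$ and $B_k$, verify that bivalent configurations admit the two driving continuations needed for the interpolation, and confirm that the adversary's per-window reset budget of $cn$ really suffices to ``swap'' between nearby $A_k$ and $B_k$ configurations while respecting the acceptable-window structure. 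Once the progress sets are correctly defined and the geometric separation is established, the Talagrand application and union bound are essentially mechanical.
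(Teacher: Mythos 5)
Your high-level plan matches the paper's: define nested pairs of ``progress'' sets, prove they stay Hamming-separated by a reset-and-continue argument, use Talagrand with an interpolation step to find a single window that avoids both sets, and union-bound over exponentially many windows. But two concrete pieces of the argument are missing or wrong, and the first of them is exactly the thing you flag as the ``main obstacle.''

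First, the precise inductive definition of the progress sets is not a mechanical detail---it is the crux of the proof, and your informal description (``committed to eventually deciding within $T-k$ further windows'') is not the definition that makes the induction close. The paper defines $Z_0^k$ to be the set of reachable configurations from which \emph{every} acceptable window lands in $Z_0^{k-1}$ with probability $>\tau$, with $\tau = e^{-t^2/(8n)}$; the universal quantifier over windows and the one-step-ahead recursion are both essential. They simultaneously give (i) the Hamming separation: if $\sigma\in Z_0^k$ and $\gamma\in Z_1^k$ differ in $\leq t$ coordinates, then applying the \emph{same} window (resetting exactly the differing coordinates) to both yields, after the resets obliterate the differences, a \emph{single} product distribution that places mass $>\tau$ on both $Z_0^{k-1}$ and $Z_1^{k-1}$, contradicting Lemma~\ref{lem:Talagrand}; and (ii) the escape property: if $\sigma\notin Z_0^k$ then by the negated universal quantifier there is some window avoiding $Z_0^{k-1}$ with probability $\geq 1-\tau$, and similarly for $Z_1^k$, which is exactly what feeds the coordinate-by-coordinate interpolation. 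Your proposal never pins this down, and your description of the interpolation as ``between a continuation driving toward $A_{k+1}$ and one driving toward $B_{k+1}$'' misstates it---one interpolates between a window avoiding $Z_0^{k-1}$ and a window avoiding $Z_1^{k-1}$, which is not the same as pushing toward one set or the other.

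Second, the choice of initial input is not justified. You fix the balanced input ($n/2$ zeros, $n/2$ ones), but nothing guarantees this input lies outside $Z_0^E \cup Z_1^E$. The paper instead argues that the all-zero input cannot lie in $Z_1^E$ (else a decision of 1 would be reachable from unanimous 0 inputs, violating correctness) and the all-one input cannot lie in $Z_0^E$; since $\Delta(Z_0^E, Z_1^E) > t$, walking from all-zeros to all-ones one bit at a time must pass through some intermediate input outside both sets. That is the input one must fix, and it need not be balanced. Finally, a small point on bookkeeping: your index convention for $A_k, B_k$ runs opposite to the paper's, which is fine in principle but made your step ``lands in $A_{k+1}\cup B_{k+1}$ with probability at most $e^{-\Omega(n)}$'' internally ambiguous; the clean formulation is that from a configuration outside $Z_0^k\cup Z_1^k$ one window suffices to reach a configuration outside $Z_0^{k-1}\cup Z_1^{k-1}$ with probability $\geq 1 - 2e^{-(t-1)^2/(8n)}$.
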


We first give a high-level outline of our proof. As a base case, we observe that the reachable configurations corresponding to a decision of 0 and the reachable configurations corresponding to a decision of 1 form two sets (denoted $Z_0^0$ and $Z_1^0$) that are significantly separated in Hamming distance. Intuitively, if conflicting decision states were too close, then the differing processors could be temporarily silenced, and the other processors could be forced to make a decision that could conflict. By interpolating over the input possibilities and applying Talagrand's inequality, we could use this base observation to prove that there is a setting of the inputs such that reaching a decision in just one ``round" of communication is very unlikely.

To work up to analyzing many rounds, we inductively build pairs of sets $Z_0^k$ and $Z_1^k$ of configurations further out from decisions. These pairs of sets will remain Hamming-separated and will be designed so that if a configuration is \emph{not} in $Z_0^k$, say, then the adversary will have a good chance of continuing the execution for $k$ more rounds without a decision of 0 occurring.
We define $Z_0^k$ such that if we start from a configuration in $Z_0^k$ and apply certain acceptable windows, then there is always a sufficient chance of landing in $Z_0^{k-1}$. We define $Z_1^k$ analogously. To show that $Z_0^k$ and $Z_1^k$ are still significantly separated in Hamming distance, we argue that if they were too close, this would imply the existence of a single product distribution placing too much weight simultaneously on $Z_0^{k-1}$ and $Z_1^{k-1}$. Since these are assumed to be Hamming-separated by the inductive hypothesis, this would contradict Talagrand's inequality.

%

We then show that as long as one avoids the union of the sets $Z_0^k$ and $Z_1^k$, then there is an acceptable window that can be used to extend the execution to a state avoiding $Z_0^{k-1} \cup Z_1^{k-1}$ with high probability. This is essentially an interpolation argument: since we know there is a choice of window that avoids $Z_0^{k-1}$ and another that avoids $Z_1^{k-1}$, we can interpolate to obtain a single choice of extending window that avoids the union.
Finally, we show that if one begins outside of $Z_0^k \cup Z_1^k$, then one can extend the execution for $k$ steps without a decision occurring with constant probability for a value of $k$ that is exponential in $n$.

To prove this theorem formally, we develop some key lemmas and definitions in the next subsections.

\subsection{A Probabilistic Lemma}
We will crucially rely on Talagrand's inequality, a very general tool for studying product measures. We will state a consequence of it here in the context of Hamming distance, as we will not need the additional generality provided by the full statement. A fuller statement and proof can be found in \cite{AS}, for example.

We first develop some convenient notation. We let $\Omega = \prod_{i=1}^n \Omega_i$, where each $\Omega_i$ is a probability space and $\Omega$ is endowed with the product measure.
We employ the usual notion of Hamming distance between points in $\Omega$: for $x = (x_1, \ldots, x_n)$ and $y = (y_1, \ldots, y_n) \in \Omega$, we define $\Delta(x,y)$ to the number of coordinates $i$ such that $x_i \neq y_i$.
Given a set $A \subseteq \Omega$ and a point $x = (x_1, \ldots, x_n) \in \Omega$, we define the Hamming distance $\Delta(x,A)$ between the point $x$ and the set $A$ to be the minimal Hamming distance attained between $x$ and a point $a \in A$:

\begin{definition} For $A \subseteq \Omega$ and $x \in \Omega$,
\[ \Delta(x,A) := \min \{ \Delta(x,a) : a \in A\}.\]
\end{definition}

Similarly, we define the Hamming distance between two sets $A, B \subseteq \Omega$ to be the minimal Hamming distance attained between a point $a \in A$ and a point $b \in B$:

\begin{definition} For $A, B \subseteq \Omega$,
\[ \Delta(A, B) := \min \{ \Delta(a,b) : a \in A, b \in B\}.\]
\end{definition}

Finally, given a set $A \subseteq \Omega$ and a non-negative real number $d$, we define the set $\mathcal{B}(A,d)$ to be the subset of points in $\Omega$ which are at a Hamming distance of at most $d$ from $A$:

\begin{definition} For $A \subseteq \Omega$ and $d \geq 0$,
\[ \mathcal{B}(A,d) := \{x \in \Omega: \Delta(x,A) \leq d\}.\]
\end{definition}

We are now prepared to state the required consequence of Talagrand's inequality (see \cite{AS}, for example):
\begin{lemma}\label{lem:Talagrand} For any $A \subseteq \Omega$ and any $d \geq 0$,
\[ \mathbb{P} [A] (1 - \mathbb{P}[\mathcal{B}(A,d)]) \leq e^{-\frac{d^2}{4n}}.\]
\end{lemma}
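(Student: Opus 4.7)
My plan is to deduce the inequality from McDiarmid's bounded-differences inequality applied to the function $f : \Omega \to \mathbb{R}$ defined by $f(x) := \Delta(x, A)$. First I would verify that $f$ has bounded differences with Lipschitz constants $c_1 = \cdots = c_n = 1$: if $x$ and $x'$ agree in every coordinate except one, then any $a \in A$ near-optimally witnessing $\Delta(x, A)$ also witnesses $\Delta(x', A)$ with distance changed by at most one, and symmetrically.

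Once the Lipschitz property is in hand, McDiarmid's inequality yields two-sided concentration of $f(X)$ around $m := \mathbb{E}[f(X)]$ with exponent $-2t^2/n$. The two factors in the lemma can then be attacked separately. Since $f \equiv 0$ on $A$, applying the lower tail at $t = m$ gives
\[ \mathbb{P}[A] \;\leq\; \mathbb{P}[f(X) \leq 0] \;\leq\; e^{-2m^2/n}. \]
And since $\Omega \setminus \mathcal{B}(A,d) = \{x : f(x) > d\}$, the upper tail at $t = d - m$ (when $d \geq m$) gives
\[ 1 - \mathbb{P}[\mathcal{B}(A,d)] \;\leq\; e^{-2(d-m)^2/n}. \]
Multiplying these two bounds and using the elementary estimate $m^2 + (d-m)^2 \geq d^2/2$ (with equality at $m = d/2$) would yield the product bound $e^{-d^2/n}$, which is even stronger than the $e^{-d^2/(4n)}$ claimed. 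The edge case $d < m$ is handled by the first factor alone, since then $\mathbb{P}[A] \leq e^{-2m^2/n} \leq e^{-2d^2/n}$.

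The only real obstacle I anticipate is the bounded-differences verification for $\Delta(\cdot, A)$, but this is a one-line triangle-type argument as sketched above. An alternative route would be to quote Talagrand's full convex-distance inequality directly from \cite{AS} and specialize it to Hamming distance (essentially the path the paper implicitly takes), but the McDiarmid route is elementary, self-contained, and already beats the claimed constant.
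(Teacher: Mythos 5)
Your proposal is correct, and it takes a genuinely different route from the paper. The paper does not prove Lemma~\ref{lem:Talagrand} at all: it states the inequality as a consequence of Talagrand's convex-distance inequality, citing \cite{AS}, where one would specialize the bound $\mathbb{P}[A]\,\mathbb{P}[\rho(A,\cdot)\geq s]\leq e^{-s^2/4}$ via the relation $\Delta(x,A)\leq\sqrt{n}\,\rho(A,x)$, which is exactly where the constant $4n$ in the exponent comes from. You instead apply McDiarmid's bounded-differences inequality to $f(x)=\Delta(x,A)$, which has Lipschitz constant $1$ in each coordinate by the triangle inequality. Your two tail bounds $\mathbb{P}[A]\leq e^{-2m^2/n}$ (since $f\equiv 0$ on $A$) and $1-\mathbb{P}[\mathcal{B}(A,d)]\leq e^{-2(d-m)^2/n}$ when $d\geq m$ are each correct, the elementary inequality $m^2+(d-m)^2\geq d^2/2$ closes the argument, and the edge case $d<m$ is handled correctly by the first factor alone with $1-\mathbb{P}[\mathcal{B}(A,d)]\leq 1$. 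The one small technicality worth flagging is the degenerate case $A=\emptyset$, where $\Delta(x,A)$ is undefined; but then $\mathbb{P}[A]=0$ and the inequality is trivial, so one may freely assume $A\neq\emptyset$. What the two approaches buy: the Talagrand route is a one-line citation and is the tool the paper later uses only in this Hamming specialization, so no generality is lost; your McDiarmid route is fully self-contained, uses machinery that is strictly more elementary (the Azuma/Doob martingale argument rather than Talagrand's induction on dimension), and in fact proves the sharper bound $e^{-d^2/n}$, a factor-of-four improvement in the exponent. This improvement would propagate to slightly better constants $\alpha$ in Theorems~\ref{thm:main} and~\ref{thm:main2}, though it does not change the qualitative exponential lower bound.
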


\subsection{The Building Blocks of the Proof}\label{sec:bb}
We now recursively define two sequences of subsets of $\Sigma^n$ that will form the building blocks of our proof of Theorem \ref{thm:main}. These definitions will be made with respect to a fixed algorithm $\mathcal{A}$ and a threshold parameter $\tau>0$ that we will set later.
Our base sets are defined as follows:

\begin{definition} We let $Z_0^0$ denote the set of reachable configurations in $\Sigma^n$ such that at least one processor has written 0 to its output bit. Similarly, we let $Z_1^0$ denote the set of reachable configurations in $\Sigma^n$ such that at least one processor has written 1 to its output bit.
\end{definition}

\begin{lemma}\label{lem:base} If the algorithm $\mathcal{A}$ satisfies measure one correctness and measure one termination, then
$\Delta\left(Z_0^0, Z_1^0\right) > t$.
\end{lemma}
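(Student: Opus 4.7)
The plan is to argue by contradiction and indistinguishability. Suppose there exist reachable configurations $\sigma^0 \in Z_0^0$ and $\sigma^1 \in Z_1^0$ with $\Delta(\sigma^0, \sigma^1) \leq t$; let $S \subseteq [n]$ denote the set of processors on which the two configurations disagree, so $|S| \leq t$. Pick a processor $p_0$ with output bit $0$ in $\sigma^0$ and a processor $p_1$ with output bit $1$ in $\sigma^1$. Measure one correctness forces both $p_0$ and $p_1$ into $S$: if $p_0 \notin S$, then $p_0$ would also have output $0$ in $\sigma^1$, and together with $p_1$'s output $1$ this would give a reachable configuration with conflicting non-$\bot$ outputs.

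Next I would build a ``ghost'' strongly adaptive adversary that can be scheduled starting from either $\sigma^0$ or $\sigma^1$ and that isolates the non-$S$ processors from the $S$-processors. In every acceptable window of the extension, the adversary takes $S_i := [n] \setminus S$ for all $i$, so each processor receives messages only from senders outside $S$; since $|S| \leq t$, every $S_i$ has size $\geq n-t$ and the window is acceptable. The key observation is that under this ghost schedule the joint distribution over trajectories of the non-$S$ processors is the same whether we start from $\sigma^0$ or from $\sigma^1$. A non-$S$ processor's state transitions depend only on its own memory, on messages it receives (all of which are generated by non-$S$ processors in both cases), and on its independent local random bits; and the non-$S$ components of $\sigma^0$ and $\sigma^1$ coincide by the definition of $S$. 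Hence the two extensions can be coupled to yield identical distributions on the sequence of non-$S$ states.

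Now I would close the argument with termination and correctness. The ghost extension is infinite and composed of acceptable windows, and every non-$S$ processor takes infinitely many sending and receiving steps, so by measure one termination each non-$S$ processor writes some decision bit with probability one. In the extension from $\sigma^0$, the output bit of $p_0$ is already $0$ and write-once, so measure one correctness forces every non-$\bot$ output bit in any reachable descendant to equal $0$; symmetrically, starting from $\sigma^1$ forces all non-$\bot$ output bits to equal $1$. Combined with the identical distribution on non-$S$ trajectories established above, this gives the same random non-$S$ processor deciding $0$ almost surely and $1$ almost surely---a contradiction, so $\Delta(\sigma^0, \sigma^1) > t$. The main obstacle is making the coupling precise: one has to verify that sending, receiving and resetting steps at $S$-processors never feed information back to non-$S$-processors under the chosen schedule, so that the distributions on non-$S$ trajectories really do match across the two ghost extensions. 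This is pure bookkeeping against the model in Section~2 rather than a genuinely hard step.
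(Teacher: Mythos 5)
Your proposal is correct and follows essentially the same approach as the paper: suppose two reachable decided configurations are close, isolate the processors outside the differing set by always delivering messages only from them, invoke measure one termination to force those processors to decide almost surely, and use the identical distribution of their trajectories together with measure one correctness to derive conflicting decisions. The side observation that $p_0, p_1 \in S$ is correct but not needed for the argument to go through.
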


\begin{proof} We suppose not. Then there exist reachable configurations $\sigma, \gamma \in \Sigma^n$ such that $\sigma \in Z_0^0$, $\gamma \in Z_1^0$, and $\Delta(\sigma,\gamma) \leq t$. Without loss of generality, we suppose that $\sigma$ and $\gamma$ only differ in the first $t$ coordinates (i.e. only in the local states of processors 1 through $t$).
Consider a non-zero probability partial execution composed of acceptable windows that results in configuration $\sigma$. The adversary can continue such an execution by always delivering the messages from the last $n-t$ processors. This will allow an arbitrarily long sequence of new acceptable windows.

Since the algorithm $\mathcal{A}$ satisfies measure one termination, if the adversary keeps extending this execution by appending new acceptable windows, with probability one a decision must eventually be reached, and since $\sigma \in Z_0^0$, this decision must be 0 (with probability one). However, we can apply the same argument to a partial execution reaching $\gamma$ and then similarly delivering messages only from the last $n-t$ processors. Since the distribution of the states of the last $n-t$ processors is the same in both cases, it must be that their decision is also the same. Since $\gamma \in Z_1^0$, this contradicts measure one correctness.
\end{proof}

Given sets $R, S_1, \ldots, S_n \subseteq [n]$ satisfying $|R| \leq t$, $|S_i| \geq n-t \; \forall i$, we say the strongly adaptive adversary can \emph{apply} this set to a reachable configuration $\sigma \in \Sigma^n$, meaning that the adversary can execute sending steps for all processors, deliver to each processor $i$ the messages sent to it by senders in $S_i$, in some fixed order, and then reset the processors in $R$. Note that the application of sets $R, S_1, \ldots, S_n$ with the specified properties results in an acceptable window (by definition).

Once we have defined sets $Z_0^{k-1}$ and $Z_1^{k-1}$ for some positive integer $k$, we define the next sets $Z_0^{k}$ and $Z_1^{k}$ as follows:

\begin{definition} We let $Z_0^k$ denote the set of reachable configurations in $\Sigma^n$ such that, for any sets $R$, $S$ such that $|R| \leq t$, $|S|\geq n-t$, the adversary applying $R,S, S, \ldots, S$ to the configuration will result in a new configuration that belongs to $Z_0^{k-1}$ with probability $> \tau$. Similarly, we let $Z_1^k$ denote the set of reachable configurations in $\Sigma^n$ such that, for such $R$ and $S$, the adversary applying $R,S, S, \ldots, S$ to the configuration will result in a new configuration that belongs to $Z_1^{k-1}$ with probability $> \tau$.
\end{definition}

\begin{lemma}\label{lem:induct} If the algorithm $\mathcal{A}$ satisfies measure one correctness and termination and $\tau \geq e^{-\frac{t^2}{8n}}$, then $\Delta\left(Z_0^k, Z_1^k\right) > t$ for all non-negative integers $k$.
\end{lemma}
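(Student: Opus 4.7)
The plan is to proceed by induction on $k$; the base case $k = 0$ is exactly Lemma~\ref{lem:base}. For the inductive step, I assume $\Delta(Z_0^{k-1}, Z_1^{k-1}) > t$ and suppose for contradiction that there exist $\sigma \in Z_0^k$ and $\gamma \in Z_1^k$ with $\Delta(\sigma,\gamma) \leq t$. Let $D \subseteq [n]$ denote the (at most $t$) coordinates on which $\sigma$ and $\gamma$ disagree, and set $S := [n]\setminus D$, so that $|S| \geq n - t$.

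Next, I apply an acceptable window of the form $W = (R, S, S, \ldots, S)$ (for a suitable choice of $R$, e.g.\ $R = D$) to each of $\sigma$ and $\gamma$. Since every receiving processor uses independent fresh randomness, and the messages it receives are a deterministic function of the senders' initial states, the joint post-window distribution is a product distribution over $\Sigma^n$. Denote these distributions by $\Omega^{\sigma}$ and $\Omega^{\gamma}$, respectively. By the hypotheses $\sigma \in Z_0^k$ and $\gamma \in Z_1^k$ one has $\Omega^{\sigma}(Z_0^{k-1}) > \tau$ and $\Omega^{\gamma}(Z_1^{k-1}) > \tau$. The key structural observation is that for every $i \in S$, the $i$-th marginals of $\Omega^{\sigma}$ and $\Omega^{\gamma}$ coincide: processor $i$ has the same initial state in $\sigma$ and $\gamma$, receives identical messages (all sent by processors in $S$, whose initial states agree), and uses independent randomness. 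Hence $\Omega^{\sigma}$ and $\Omega^{\gamma}$ differ only in their $|D| \leq t$ marginals on $D$.

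Granting that one can exhibit a single product distribution $\mu$ on $\Sigma^n$ with both $\mu(Z_0^{k-1}) > \tau$ and $\mu(Z_1^{k-1}) > \tau$, the contradiction is immediate: Lemma~\ref{lem:Talagrand} applied to $\mu$ with $A = Z_0^{k-1}$ and $d = t$, combined with the inductive hypothesis (which forces $\mathcal{B}(Z_0^{k-1},t) \cap Z_1^{k-1} = \emptyset$ and so $\mu(Z_1^{k-1}) \leq 1 - \mu(\mathcal{B}(Z_0^{k-1},t))$), yields
\[
\mu(Z_0^{k-1})\,\mu(Z_1^{k-1}) \;\leq\; e^{-t^2/(4n)} \;\leq\; \tau^2,
\]
contradicting the assumption that both factors exceed $\tau$.

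The main obstacle is producing this single hybrid product distribution $\mu$. The naive candidates obtained by mixing the $D$-marginals of $\Omega^{\sigma}$ and $\Omega^{\gamma}$ coordinate-wise are each themselves subject to the above Talagrand bound, so no individual hybrid can carry mass exceeding $\tau$ on both target sets. My plan is to exploit the reset operation by choosing $R = D$: after resetting, each $i \in D$ retains only its input bit, output bit, identity, and counter, erasing precisely the internal memory that was the source of the disagreement between $\sigma_i$ and $\gamma_i$. A careful case analysis of how the output bits of reset processors depend on the common messages received from $S$ should show that, modulo the retained bits, the $D$-marginals of $\Omega^{\sigma}$ and $\Omega^{\gamma}$ also coincide, making $\mu := \Omega^{\sigma} = \Omega^{\gamma}$ an eligible witness. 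The delicate edge cases are those where the input or output bits themselves differ between $\sigma_i$ and $\gamma_i$ for some $i \in D$; handling these by invoking reachability together with measure-one correctness and termination, in the spirit of Lemma~\ref{lem:base}, is where I expect the hardest reasoning to lie.
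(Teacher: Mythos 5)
Your overall strategy matches the paper's proof exactly: the same induction, the same choice of window that resets the disagreement set $D$ and delivers only from $S = [n]\setminus D$, the same observation that the post-window configuration is product-distributed, and the same Talagrand contradiction once a single product measure $\mu$ with $\mu(Z_0^{k-1}), \mu(Z_1^{k-1}) > \tau$ is in hand. Your marginal-matching observation for $i \in S$ is a clean way to phrase why the non-reset coordinates of $\Omega^\sigma$ and $\Omega^\gamma$ coincide, and it is faithful to the paper's reasoning.

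The gap is exactly where you flag it: you never close the case in which some retained bit (input bit, output bit, or the reset counter) differs between $\sigma_i$ and $\gamma_i$ for some $i \in D$. In that case the reset does \emph{not} erase the disagreement, so $\Omega^\sigma \neq \Omega^\gamma$ and no single eligible $\mu$ falls out of the construction; the coordinate-wise hybrids you mention do not help because, as you note, each one is itself subject to the Talagrand bound. This is not a hypothetical worry that ``a careful case analysis should show'' disappears: it genuinely requires a separate argument. The paper disposes of the output-bit case directly, not via Talagrand: if some processor in $\sigma$ has already decided 0, then repeatedly applying the window $R,S,\ldots,S$ from $\sigma$ induces the \emph{same} distribution on the last $n-t$ coordinates as doing so from $\gamma$ (those processors never hear from $D$), and since $\gamma \in Z_1^k$ this extension eventually produces a decision of 1 with nonzero probability, contradicting measure one correctness. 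You should add that argument rather than hope the Talagrand machinery absorbs it. For the input-bit/counter disagreement you are right that the difficulty is real; the paper's statement that ``the resets will obliterate the differences'' silently assumes those retained bits coincide, so your caution there is warranted, but as written your proof leaves this case entirely open and thus is incomplete.
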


\begin{proof}
We proceed by induction on $k$. The base case $k=0$ is addressed above in Lemma \ref{lem:base}.
We assume the result holds for $k-1 \geq 0$, and we suppose it is false for $k$.
Then there exist reachable configurations $\sigma, \gamma \in \Sigma^n$ such that $\sigma \in Z_0^k$, $\gamma \in Z_1^k$, and $\Delta(\sigma,\gamma) \leq t$. Without loss of generality, we suppose that $\sigma$ and $\gamma$ only differ in the first $t$ coordinates. We let $R$ denote the set $\{1, 2, \ldots, t\}$ and $S$ denote the set $\{t+1, t+2, \ldots, n\}$. By definition of $Z_0^k$, if the adversary applies $R, S, \ldots, S$ to $\sigma$, this will with probability $> \tau$ result in a new configuration belonging to $Z_0^{k-1}$. Similarly, if the adversary applies $R, S, \ldots, S$ to $\gamma$, this will with probability $> \tau$ result in a new configuration belonging to $Z_1^{k-1}$.

We first suppose that both $\sigma$ and $\gamma$ are configurations in which no decisions have occurred. In other words, no processors have yet written to their output bits. Assuming this, the resets will obliterate the differences between the first $t$ processor states. Hence
the distribution of the resulting configuration is \emph{identical} in these two cases, as it is independent of the prior contents of the memories of the reset processors, as these have been erased (and their messages went undelivered). Since local randomness is sampled independently by each processor only in response to the message receipts in the window, which occur \emph{after} the deterministic sending steps, the distribution on the resulting configuration (which is reachable with probability one) is in fact a product distribution. This distribution places weight $> \tau$ on each of two sets, $Z_0^{k-1}$ and $Z_1^{k-1}$, that are separated by a Hamming distance $ > t$.
Applying Lemma \ref{lem:Talagrand}, we thus have that
\[\tau^2 < e^{-\frac{t^2}{4n}} \Leftrightarrow \tau  < e^{-\frac{t^2}{8n}}.\]
This contradicts our stipulation on the value of $\tau$.

We finally consider the case where some decision has already been made in $\sigma$. Since $\sigma \in Z_0^k$, this decision must be 0. However, repeatedly applying acceptable windows of $R, S, \ldots, S$ to $\sigma$ must result in a decision of 1 with nonzero probability, since $\gamma \in Z_1^k$, and the distribution of the final $n-t$ states here is independent of the output bits of the first $t$ processors, as their messages are never delivered. This contradicts measure one correctness.
\end{proof}

We now prove that if a reachable configuration is not in $Z_0^k$ or $Z_1^k$, then the adversary can choose the next acceptable window in a way that will (with high probability) avoid landing in $Z_0^{k-1} \cup Z_1^{k-1}$. The intuition for this can be developed as follows. We know that there is a product distribution induced by an acceptable window that places low probability on $Z_0^{k-1}$, and we know there is a (potentially different) product distribution induced by an acceptable window that places low probability on $Z_1^{k-1}$. We will obtain a single product distribution that places low probability on both sets simultaneously by interpolating between these two distributions. We use the fact that Lemma \ref{lem:Talagrand} yields graceful degradation in the quality of the threshold for ``low probability" as we perturb one coordinate of the product distribution at a time. If we interpolate carefully, we can also ensure that the interpolated distribution we obtain is itself induced by an acceptable window.

\begin{lemma}\label{lem:avoid} Suppose the algorithm $\mathcal{A}$ satisfies measure one correctness and termination and $\tau := e^{-\frac{t^2}{8n}}$. Then, for any reachable configuration $\sigma$ not in $Z_0^k \cup Z_1^k$, there exist sets $R, S_1, \ldots, S_n$ that can be applied to $\sigma$ such that the resulting reachable configuration falls outside $Z_0^{k-1} \cup Z_1^{k-1}$ with probability $\geq 1 - 2e^{-\frac{(t-1)^2}{8n}}$.
\end{lemma}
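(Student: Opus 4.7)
Since $\sigma\notin Z_0^k$, unwinding the definition furnishes sets $R^0, S^0$ with $|R^0|\le t$ and $|S^0|\ge n-t$ such that applying $W_0:=(R^0,S^0,\ldots,S^0)$ to $\sigma$ lands in $Z_0^{k-1}$ with probability $\le\tau$; symmetrically $\sigma\notin Z_1^k$ yields $W_1:=(R^1,S^1,\ldots,S^1)$ with the analogous bound for $Z_1^{k-1}$. Write $\mu^b$ for the distribution of the resulting configuration under $W_b$. The plan is to combine $W_0$ and $W_1$ into a single acceptable window whose output distribution is simultaneously small on both $Z_0^{k-1}$ and $Z_1^{k-1}$. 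The key structural input is that the output of any acceptable window is a product distribution on $\Sigma^n$: sending is deterministic from $\sigma$, each recipient samples its own fresh randomness independently during the receiving steps, and resets act coordinatewise.

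I interpolate coordinate by coordinate. Order the indices of $[n]$ so that the elements of $R^1\setminus R^0$ come first, then those of $[n]\setminus(R^0\triangle R^1)$, and finally those of $R^0\setminus R^1$. Define hybrid windows $W^{(0)}=W_1,W^{(1)},\ldots,W^{(n)}=W_0$ by switching, at step $j\to j+1$, both the set $S_i$ and the reset status of the next coordinate $i$ from its $W_1$-value to its $W_0$-value. In this order we first remove elements from $R$, then leave $R$ unchanged, then add new elements, so $|R^{(j)}|\le t$ is maintained throughout and every $W^{(j)}$ is a legal acceptable window. Each $W^{(j)}$ induces a product distribution $\mu^{(j)}$ on $\Sigma^n$, with $\mu^{(0)}=\mu^1$, $\mu^{(n)}=\mu^0$, and $\mu^{(j)},\mu^{(j+1)}$ agreeing on all but one marginal.

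Set $a_j:=\mu^{(j)}(Z_0^{k-1})$, $b_j:=\mu^{(j)}(Z_1^{k-1})$, and $\theta:=e^{-(t-1)^2/(8n)}$. By Lemma~\ref{lem:induct} we have $\Delta(Z_0^{k-1},Z_1^{k-1})>t$; applying Lemma~\ref{lem:Talagrand} to $\mu^{(j)}$ yields $a_j b_j\le e^{-t^2/(4n)}<\theta^2$, so at each $j$ at most one of $a_j,b_j$ can exceed $\theta$. Since $a_n,b_0\le\tau<\theta$, if no index $j$ satisfied $\max(a_j,b_j)\le\theta$, then $\{j:a_j>\theta\}$ and $\{j:b_j>\theta\}$ would partition $\{0,\ldots,n\}$ with $0$ on the first side and $n$ on the second, producing some transition $j^*$ with $a_{j^*}>\theta$ and $b_{j^*+1}>\theta$. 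To rule this out, couple $X\sim\mu^{(j^*)}$ with $Y\sim\mu^{(j^*+1)}$ using common randomness on the $n-1$ shared marginals so that $\Delta(X,Y)\le 1$; this yields $b_{j^*+1}\le\mu^{(j^*)}(\mathcal{B}(Z_1^{k-1},1))$. Since $\Delta(Z_0^{k-1},\mathcal{B}(Z_1^{k-1},1))\ge t$, a second application of Lemma~\ref{lem:Talagrand} with $d=t-1$ gives $a_{j^*}\cdot\mu^{(j^*)}(\mathcal{B}(Z_1^{k-1},1))\le e^{-(t-1)^2/(4n)}$, whence $b_{j^*+1}<\theta$, contradicting the transition. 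Therefore some $W^{(j)}$ satisfies $a_j+b_j\le 2\theta$, which is the required window.

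The main obstacle is exactly this single-coordinate transition: one invocation of Talagrand blocks $a_j$ and $b_j$ from being simultaneously moderate but does not prevent a sharp flip across one coordinate. The coupling step converts a one-coordinate marginal swap into a one-step Hamming enlargement, which is what enables a second Talagrand application (with $d=t-1$) to close the loop, and is also the source of the $(t-1)$ rather than $t$ in the final bound. A secondary bookkeeping point is the reset-set interpolation, which must be sequenced so that every intermediate $R^{(j)}$ respects $|R^{(j)}|\le t$.
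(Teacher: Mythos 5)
Your proof is correct and follows essentially the same route as the paper: interpolate coordinate-by-coordinate between the two acceptable windows while keeping $|R|\le t$, use a single-coordinate coupling to relate adjacent hybrids via a Hamming ball of radius $1$, and close with a second application of Talagrand at radius $t-1$. The only cosmetic differences are that you argue by contradiction at a ``sharp transition'' $j^*$ (introducing the auxiliary observation that $a_jb_j\le e^{-t^2/4n}<\theta^2$), whereas the paper directly takes $j^*$ to be the minimal index with $\pi_{j^*}(Z_0^{k-1})\le\eta$ and bounds the other set; and you spell out the ordering of coordinates that keeps $|R^{(j)}|\le t$ more explicitly than the paper's WLOG relabeling, which is a small but welcome improvement in rigor.
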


\begin{proof}
Consider a reachable $\sigma$ in the complement of $Z_0^k \cup Z_1^k$. By definition of $Z_0^k$, this means there is some choice of $R, S$ such that applying $R, S, \ldots, S$ to $\sigma$ will avoid $Z_0^{k-1}$ with probability $\geq 1 - \tau$. Similarly, by definition of $Z_1^k$, there is some choice $R',S'$ such that applying $R', S', \ldots, S'$ to $\sigma$ will avoid $Z_1^{k-1}$ with probability $\geq 1 - \tau$.

We assume without loss of generality that $R' = \{1, 2, \ldots, t'\}$ for some $t' \leq t$. For each $j$ from 0 to $n$, we define the set $R_j$ to be the union of $R \cap \{1, 2, \ldots, j\}$ and $R' \cap \{j+1, \ldots, t'\}$. We observe that $|R_j| \leq t$ for each $j$. We also define $S^j_i := S$ for $i \leq j$ and $S^j_i := S'$ for $i > j$. Then, for each $j$,  we can apply $R_j, S^j_1, \ldots, S^j_n$ to $\sigma$ to produce a new reachable configuration. For each $j$, this induces a product distribution $\pi_j$ on the set of reachable configurations.

By construction, the distribution $\pi_0$ places probability $\leq \tau$ on $Z_1^{k-1}$ and the distribution $\pi_n$ places probability $\leq \tau$ on $Z_0^{k-1}$. The first $j$ coordinates of $\pi_j$ have the same distributions as in $\pi_n$, will the remaining coordinates have the same distribution as in $\pi_0$. We define $\eta:=e^{-\frac{(t-1)^2}{8n}}$. We let $j^*$ denote the minimal value of $j$ such that $\pi_j$ places probability $\leq \eta$ on $Z_0^{k-1}$. (Such a $j^*$ exists since $j = n$ satisfies this condition.) If $j^* = 0$, then $\pi_0$ then places probability $\leq \eta$ on \emph{each of} $Z_0^{k-1}$ and $Z_1^{k-1}$. Otherwise, we argue as follows.

We use $\mathbb{P}_{\pi_j}(A)$ for a set $A$ to denote the probability that $\pi_j$ places on a set $A$. We claim that:
\begin{equation}\label{eq:ballexpand}
\mathbb{P}_{\pi_{j^*}}\left[\mathcal{B}\left(Z_0^{k-1},1\right)\right] \geq \mathbb{P}_{\pi_{j^*-1}} \left[ Z_0^{k-1}\right].
\end{equation}
To see this, consider that the product distributions $\pi_{j^*}$ and $\pi_{j^*-1}$ only differ in a single coordinate. Thus, if we sample a configuration according to $\pi_{j^*-1}$ and obtain a result in $Z_0^{k-1}$, we can resample the differing coordinate to match $\pi_{j^*}$ and we are guaranteed to obtain a result in $\mathcal{B}\left(Z_0^{k-1},1\right)$. The inequality (\ref{eq:ballexpand}) follows.

We observe that the set $\mathcal{B}(Z_1^{k-1}, t-1)$ is disjoint from the set $\mathcal{B}(Z_0^{k-1},1)$, since $\Delta(Z_0^{k-1}, Z_1^{k-1}) > t$.
Hence,
\begin{equation}\label{eq:prob}
1 - \mathbb{P}_{\pi_{j^*}}\left[\mathcal{B}\left(Z_1^{k-1},t-1\right)\right] \geq \mathbb{P}_{\pi_{j^*}}\left[\mathcal{B}\left(Z_0^{k-1},1\right)\right].
\end{equation}
Combining (\ref{eq:ballexpand}) and (\ref{eq:prob}), we see that
\[ \mathbb{P}_{\pi_{j^*}}\left[Z_1^{k-1}\right] \left(1 - \mathbb{P}_{\pi_{j^*}}\left[\mathcal{B}\left(Z_1^{k-1},t-1\right)\right]\right) \geq \eta \mathbb{P}_{\pi_{j^*}} \left[Z_1^{k-1}\right].\]
Applying Lemma \ref{lem:Talagrand} and recalling the definition of $\eta$, we have
\[\mathbb{P}_{\pi_{j^*}} \left[Z_1^{k-1}\right] \leq \frac{1}{\eta} e^{-\frac{(t-1)^2}{4n}} = \eta.\]

We now have a product distribution $\pi_{j^*}$ induced by an acceptable window that places probability $\leq \eta$ on each set $Z_0^{k-1}, Z_1^{k-1}$, and hence $\mathbb{P}_{\pi_{j^*}}\left[Z_0^{k-1} \cup Z_1^{k-1}\right] \leq 2 \eta$, as required.
\end{proof}

\subsection{Proof of Theorem \ref{thm:main}}\label{sec:putTogether}

We now employ the notation and lemmas of the previous subsections to prove Theorem \ref{thm:main}.  We define $\alpha := \frac{c^2}{9}$ and we define $C$ sufficiently small such that
\begin{equation}\label{eq:expsteps}
Ce^{\alpha n} \leq \frac{1}{4} e^{\frac{(cn-1)^2}{8n}}
\end{equation}
holds for all positive integers $n$. For convenience of notation, we define $E:= Ce^{\alpha n}$.

We consider the sets $Z_0^E, Z_1^E$. By Lemma \ref{lem:induct}, we know that $\Delta \left(Z_0^E, Z_1^E\right) > t$. We consider an initial configuration $\sigma$ in which all input bits are set to 0. Then, it must be the case that $\sigma \notin Z_1^E$. Otherwise, there would be a non-zero probability partial execution beginning with $\sigma$ and leading to a decision of 1, which contradicts measure one correctness. Similarly, an initial configuration $\gamma$ in which all input bits are set to 1 cannot belong to $Z_0^E$. Hence, as we interpolate between $\sigma$ and $\gamma$, changing the input bit of one processors at a time, we must discover an initial configuration $\delta$ such that $\delta \notin Z_0^E \cup Z_1^E$. We fix this setting of the inputs.

Our strongly adaptive adversary is now defined as follows. Confronted with a partial execution resulting in a configuration $\sigma$, the adversary determines the maximum value of $k\leq E$ such that $\sigma \notin Z_0^k \cup Z_1^k$. If no such $k$ exists, it continues arbitrarily within the constraint of producing acceptable windows. If such a $k$ does exist, then it applies the sequence of sets guaranteed by Lemma \ref{lem:avoid} in order to yield a $\geq 1 - 2 e^{-\frac{(cn-1)^2}{8n}}$ probability of reaching a new configuration at the end of the acceptable window that is not in $Z_0^{k-1}\cup Z_1^{k-1}$.

Since we begin with an initial configuration that is not in $Z_0^E \cup Z_1^E$, the probability that this strongly adaptive adversary will succeed in causing $\geq E$ acceptable windows to occur before any decision is made is at least:
\[1 - 2E e^{-\frac{(cn-1)^2}{8n}} \geq \frac{1}{2},\]
recalling (\ref{eq:expsteps}) and the definition of $E$. This completes our proof of Theorem \ref{thm:main}.

\section{Consequence for Resilience Against Crash Failures}
The techniques developed to prove the exponential lower bound in the previous section have implications beyond the strongly adaptive adversary. In fact, we can use the same techniques (with a few minor modifications) to prove a lower bound for more traditional asynchronous adversaries that applies to a large, natural class of algorithms. In particular, we consider an asynchronous adversary (with unbounded computational power and knowledge of all messages and internal states) that can cause up to $t$ crash failures during an execution as well as controlling the message scheduling. The only constraint on message delivery is that all messages sent must eventually be delivered, if the recipient has not crashed.

We now define the crucial properties of an algorithm that are needed to apply our lower bound techniques in this setting:
\begin{definition}\label{def:algPropF}
We say an algorithm $\mathcal{A}$ is \emph{forgetful} if each message sent by a processor depends only on its input bit as well as messages received and local randomness sampled since the previous sending event.
\end{definition}

Intuitively, this means that processors do not ``remember" prior events that are not reflected by the most recently received messages.
We define one more property of an algorithm that we will require in conjunction with forgetfulness:
\begin{definition}\label{def:algPropFC}
We say an algorithm $\mathcal{A}$ is \emph{fully communicative} if whenever a processor receives the most recently sent messages from $n-t$ processors, it sends a new message to all $n$ processors.
\end{definition}

These properties are both present in the algorithms in \cite{BO83,B84}, and seem natural in the context of crash failures, where one cannot wait for messages from $t$ processors that may have crashed.
We will prove that our exponential lower bound extends to forgetful, fully communicative algorithms against an adversary able to cause $\leq t$ crash failures, making only minor semantic modifications to the proof in Section \ref{sec:lowerbound}. Intuitively, the combination of forgetfulness and full communication mimics the effect of the resetting failures we previously considered. Now processors are retaining old information forever in their state, but they are basing current actions only on ``recent" information, thereby proceeding as if they have forgotten the outdated portions of their internal state.

In this context, we define a \emph{reachable configuration} to be any configuration that occurs with non-zero probability with at most $t$ crash failures (note that we have dropped the notion of acceptable windows). We analogously define measure one correctness and termination for algorithms by requiring that all reachable configurations display only valid combinations of input and output bits and that any infinite execution in which at most $t$ crash failures occur and all other processors take infinitely many sending and receiving steps has probability zero.

We will prove:
\begin{theorem}\label{thm:main2} We set $t = cn$ where $c>0$ is any fixed positive constant. Then there exist absolute positive constants $C,\alpha$ (depending only on $c$) such that, for any fully communicative and forgetful algorithm achieving measure one correctness and termination, there is an asynchronous adversary and a setting of the inputs bits such that with probability $\geq \frac{1}{2}$, the running time is $\geq C e^{\alpha n}$.
\end{theorem}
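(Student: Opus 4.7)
The plan is to follow the architecture of the proof of Theorem \ref{thm:main} in Section \ref{sec:lowerbound}, replacing the use of resetting steps with the complementary use of the forgetful and fully communicative properties of $\mathcal{A}$. The base sets $Z_0^0$ and $Z_1^0$ are defined identically (now with respect to reachability under $\leq t$ crashes), and the inductive sets $Z_0^k$ and $Z_1^k$ are defined via the adversary applying one ``round'' in which all processors take sending steps and each processor in a chosen set $S \subseteq [n]$ of size $n-t$ then receives the just-sent messages from each of the processors in $S$. The $t$ processors outside of $S$ play the role previously played by the reset set, being silenced by non-delivery of their outgoing messages rather than by formal memory erasure; full communication ensures that the active processors actually send fresh messages each round, and forgetfulness ensures that each such send depends only on the input bit and receipts since the previous sending step.

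For the analog of Lemma \ref{lem:base}, I take $\sigma \in Z_0^0$ and $\gamma \in Z_1^0$ differing on a set $D$ of at most $t$ coordinates, and extend both executions by repeated rounds with $S = [n]\setminus D$. Full communication guarantees that each processor in $S$ receives $n-t$ messages per round and keeps making progress, while forgetfulness guarantees that both the next message a processor in $S$ sends and any output bit it writes depend only on its input bit and on its receipts since its last sending step. Coupling the randomness across the two extensions makes the entire sequence of activity on $S$ identically distributed, so by measure-one termination and correctness the processors in $S$ must eventually decide on the same value in both executions, contradicting the assumption that $\sigma \in Z_0^0$ and $\gamma \in Z_1^0$. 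The analog of Lemma \ref{lem:avoid} then carries over essentially verbatim, since its proof uses only that the hybrid distributions are product distributions and that Talagrand's inequality applies.

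The hardest step is the analog of Lemma \ref{lem:induct}. In the reset setting, resetting the $t$ silenced processors makes the two post-round distributions literally identical, so a single product distribution places weight $>\tau$ on each of $Z_0^{k-1}$ and $Z_1^{k-1}$ and Talagrand's inequality contradicts the inductive Hamming separation. Without resets, the two post-round distributions instead factor as $\delta_{\sigma|_D} \otimes \pi_U$ and $\delta_{\gamma|_D} \otimes \pi_U$: they share a common product distribution $\pi_U$ on the states of the $n-t$ processors in $S$ (since, after all processors send, forgetfulness makes the next behavior of any processor in $S$ depend only on its $S$-receipts together with its input and output bits, which agree in $\sigma$ and $\gamma$ on $S$), but they disagree as point masses on the $D$-coordinates. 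The obstacle is to show that the $U$-projections of $Z_0^{k-1}$ and $Z_1^{k-1}$, restricted respectively to compatibility with $\sigma|_D$ and $\gamma|_D$, remain Hamming-separated in $\Sigma^{n-t}$ by enough for Talagrand applied to $\pi_U$ to give a contradiction; the naive projection bound $> t - |D|$ inherited from the inductive hypothesis degenerates to $0$ when $|D| = t$. I expect the resolution to be to redefine $Z_0^k$ and $Z_1^k$ in terms of \emph{effective configurations} that quotient out those portions of each processor's state which, by forgetfulness, cannot affect any future message or output bit, so that Hamming separation is maintained only on coordinates that actually influence the algorithm. With this effective-state formulation in hand, the closing argument of Section \ref{sec:putTogether} then produces Theorem \ref{thm:main2} with the same quantitative parameters $\alpha$ and $C$, using the length of the longest message chain preceding the first decision as the running-time measure.
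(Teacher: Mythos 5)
Your proposal is essentially the paper's argument, and you have correctly identified the one genuine subtlety. The paper defines $Z_0^k, Z_1^k$ directly over literal configurations (including a full log of sent and received messages), applies $S,\ldots,S$ with $S = [n]\setminus D$, and then observes that the resulting configurations from $\sigma$ and from $\gamma$ ``only differ in portions of the local state that can no longer affect behavior of the processors going forward''; from this it concludes that the single product distribution induced from $\gamma$ places weight $>\tau$ on \emph{both} $Z_0^{k-1}$ and $Z_1^{k-1}$, so Talagrand's inequality closes the induction exactly as in Lemma \ref{lem:induct}. Implicit in this step is precisely the fact you isolate: forgetfulness must imply that the sets $Z_0^{k-1}, Z_1^{k-1}$ are insensitive to those dead portions of state, so that membership transfers from a configuration reached from $\sigma$ to its coupled counterpart reached from $\gamma$. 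Your explicit ``effective configuration'' quotient is simply a formalization of that implicit closure property; the paper leaves it at the level of a verbal remark. The remaining ingredients --- the crash-based base case with $S=[n]\setminus D$ (using crashes at the divergence point to manufacture a single reachable $\delta$ that both executions can continue from), the verbatim carry-over of Lemma \ref{lem:avoid}, the fully-communicative property to guarantee that each round lengthens the longest message chain, and the final accounting with $\alpha = c^2/9$ and $E = Ce^{\alpha n}$ --- all match the paper's Section 5. In short: correct, same approach, with your quotient construction serving as a cleaner way to state the one step the paper treats informally.
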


\subsection{Definitions and Lemmas}
We first adjust our definitions to obtain suitable sets $Z_0^k, Z_1^k$ for this setting. Since there are no longer any resets, we can assume without loss of generality that the local state of a processor includes a log of all messages the processor has received and sent throughout the execution so far. We will define all of our sets $Z_0^k, Z_1^k$ to be subsets of reachable configurations containing no crashed processors. We will rely on the fully communicative nature of the algorithm to additionally restrict to reachable configurations in which all processors are ready to send to all other processors.

Given a reachable configuration $\sigma$ and sets $S_1, \ldots, S_n$ all of size $\geq n-t$, we say the adversary \emph{applies} these sets to $\sigma$ to mean that the adversary executes the following sequence of steps. First, all processors taking sending steps. Then, each processor $i$ receives the messages just sent to it from the processors in $S_i$ (in some fixed order). Note that when the algorithm is fully communicative, beginning from an initial configuration and repeatedly applying such $n$-tuples of sets will result in every processor sending to every other processor in each sending step.

Now, analogously to the definitions in Section \ref{sec:bb}, we define:

\begin{definition} We let $Z_0^0$ denote the set of reachable configurations in $\Sigma^n$ such that at least one processor has written 0 to its output bit. Similarly, we let $Z_1^0$ denote the set of reachable configurations in $\Sigma^n$ such that at least one processor has written 1 to its output bit.
\end{definition}

\begin{definition} For $k \geq 1$, we let $Z_0^k$ denote the set of reachable configurations in $\Sigma^n$ where all processors are poised to send messages to all other processors and for any set $|S|\geq n-t$, the adversary applying $S, S, \ldots, S$ to the configuration will result in a new configuration that belongs to $Z_0^{k-1}$ with probability $> \tau$. Similarly, we let $Z_1^k$ denote the set of reachable configurations in $\Sigma^n$ such that, for any $|S| \geq n-t$, the adversary applying $S, S, \ldots, S$ to the configuration will result in a new configuration that belongs to $Z_1^{k-1}$ with probability $> \tau$.
\end{definition}

We then have:
\begin{lemma}\label{lem:induct2} If a fully communicative and forgetful algorithm $\mathcal{A}$ satisfies measure one correctness and termination and $\tau \geq e^{-\frac{t^2}{8n}}$, then $\Delta\left(Z_0^k, Z_1^k\right) > t$ for all non-negative integers $k$.
\end{lemma}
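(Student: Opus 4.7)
My plan is to replay the proof of Lemma \ref{lem:induct} nearly verbatim, substituting full communication plus forgetfulness for the role of resets. I induct on $k$.

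Base case ($k=0$): The analog of Lemma \ref{lem:base}. If $\sigma \in Z_0^0$ and $\gamma \in Z_1^0$ differ only in the first $t$ coordinates, the adversary crashes processors $1,\ldots,t$ in both executions. By measure one termination, a decision is eventually reached in each continuation, but the remaining $n-t$ processors have identical starting states and sample independent fresh randomness, so the continuations are identically distributed. By measure one correctness, the $\sigma$-continuation decides $0$ while the $\gamma$-continuation decides $1$, which is the required contradiction.

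Inductive step: Assume $\Delta(Z_0^{k-1}, Z_1^{k-1}) > t$ and suppose for contradiction that there exist reachable $\sigma \in Z_0^k$, $\gamma \in Z_1^k$ differing only in the first $t$ coordinates. Let $S := \{t+1,\ldots,n\}$ and apply $S,\ldots,S$ in the sense of the new definition. By the definitions of $Z_0^k$ and $Z_1^k$, the two resulting distributions over reachable configurations place weight $> \tau$ on $Z_0^{k-1}$ and on $Z_1^{k-1}$ respectively. The crux of the argument is that these two distributions are a single common product distribution, so that Lemma \ref{lem:Talagrand} together with the inductive hypothesis yields $\tau^2 < e^{-t^2/(4n)}$ as in the original proof. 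The last $n-t$ processors are identical in $\sigma$ and $\gamma$; they send identical deterministic messages, receive identical messages (only from one another, since $S$ excludes the first $t$), and sample independent fresh randomness, so their post-window distribution is identical and factorizes as a product. For each of the first $t$ processors, forgetfulness guarantees that its contribution to any future sending event depends only on its input bit together with messages received in the current window (and fresh randomness) and not on its pre-window memory; full communication ensures the post-window configuration is poised to send to everyone, so membership in $Z_0^{k-1}$ or $Z_1^{k-1}$, being determined by future $S,\ldots,S$ applications, sees only this common product structure. The Talagrand contradiction then goes through verbatim.

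The residual case, in which a decision has already been made in $\sigma$ (so necessarily $0$, since $\sigma \in Z_0^k$), is handled exactly as in the original proof: repeatedly applying $S,\ldots,S$ to $\gamma$ must eventually reach $Z_1^0$ with positive probability, yet the same sequence applied to $\sigma$ can only produce decisions of $0$ (and the induced distributions on the last $n-t$ processors agree), contradicting measure one correctness. The main obstacle is precisely the step where forgetfulness is invoked to take the place of the memory wipe in the reset model: unlike the reset case, the pre-window memories of the first $t$ processors persist, and one must trace through the forgetful property carefully to see that the residual differences are invisible to the predicates of membership in $Z_0^{k-1}$ and $Z_1^{k-1}$, so that the post-window distribution can legitimately be treated as a single product measure for Talagrand's inequality.
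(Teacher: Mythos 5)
Your proof follows the same overall template as the paper's: replay the reset-model argument, replacing the memory wipe with crash faults plus forgetfulness and full communication, and use the same Talagrand contradiction in the inductive step. The inductive step and the residual case (a decision already present in $\sigma$ or $\gamma$) match the paper's reasoning closely.

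The one place where you diverge from the paper is the base case. You crash processors $1,\ldots,t$ directly at $\sigma$ and at $\gamma$, then assert ``the continuations are identically distributed.'' The paper is more careful here: it constructs an intermediate reachable configuration $\delta$ by replaying the execution to $\sigma$ but crashing each of the first $t$ processors \emph{before} it takes any action that diverges from the $\gamma$-execution, so that $\delta$ agrees with $\sigma$ and $\gamma$ on the last $n-t$ coordinates and the crashed processors have sent no divergent messages. The point of this construction is precisely the spot your argument elides: after crashing at $\sigma$ (or $\gamma$), there may be undelivered messages already in the buffer from processors $1,\ldots,t$, and since $\sigma$ and $\gamma$ differ on those processors' states, those in-flight messages may differ between the two continuations. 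Under the paper's delivery constraint these must eventually reach the non-crashed recipients, so the two continuations are \emph{not} identically distributed without further argument; your phrase ``identically distributed'' tacitly assumes the adversary can suppress them. The $\delta$-construction removes the problem by ensuring the crashed processors never sent any divergent message in the first place, and then a single extension from $\delta$ is compared against both $\sigma$ and $\gamma$. Your argument is salvageable with the usual delay-until-decided trick, but as written it skips over the very issue the paper's $\delta$ was introduced to avoid; otherwise the two proofs are the same.
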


\begin{proof} We first establish the base case, i.e. that $\Delta\left(Z_0^0, Z_1^0 \right) > t$. This is similar to the proof of Lemma \ref{lem:base}.

We suppose there exist reachable configurations $\sigma, \gamma \in \Sigma^n$ such that $\sigma \in Z_0^0$, $\gamma \in Z_1^0$, and $\Delta(\sigma,\gamma) \leq t$. Without loss of generality, we suppose that $\sigma$ and $\gamma$ only differ in the first $t$ coordinates (i.e. only in the local states of processors 1 through $t$). We let $S$ denote the set $\{t+1, \ldots, n\}$.

Consider a non-zero probability partial execution that results in configuration $\sigma$.
We can define another non-zero probability partial execution by executing most of the same steps, but crashing each of the first $t$ processors before they send any messages that are not sent in a partial execution resulting in $\gamma$. In other words, we reach a new configuration $\delta$ that agrees with $\sigma, \gamma$ in the final $n-t$ coordinates, and the steps taken by the first $t$ processors in $\delta$ are precisely the common steps taken by these processors in both $\sigma, \gamma$: at the point where the actions of these processors diverge in $\sigma$ and $\gamma$, the processors are crashed.

Now $\delta$ is reachable, and the adversary can continue a partial execution from $\delta$ by continually executing sending and receiving steps among the final $n-t$ processors. Since the algorithm $\mathcal{A}$ satisfies measure one termination, if the adversary keeps extending this execution, with probability one a decision must eventually be reached. Let's suppose that this decision is 1 with non-zero probability. Then the same extension can be applied to a partial execution reaching $\sigma$ and this will yield conflicting decisions with non-zero probability, contradicting measure one correctness. Similarly, if the decision reached from extending $\delta$ is 0 with probability 1, then a partial execution reaching $\gamma$ can be extended to yield conflicting decisions. We may conclude that $\Delta(Z_0^0, Z_1^0) > t$.

We now proceed by induction on $k$ (similarly to the proof of Lemma \ref{lem:induct}).
We assume the result holds for $k-1 \geq 0$, and we suppose it is false for $k$.
Then there exist reachable configurations $\sigma, \gamma \in \Sigma^n$ such that $\sigma \in Z_0^k$, $\gamma \in Z_1^k$, and $\Delta(\sigma,\gamma) \leq t$. Without loss of generality, we suppose that $\sigma$ and $\gamma$ only differ in the first $t$ coordinates. We let $S$ denote the set $\{t+1, t+2, \ldots, n\}$. By definition of $Z_0^k$, if the adversary applies $S, \ldots, S$ to $\sigma$, this will with probability $> \tau$ result in a new configuration belonging to $Z_0^{k-1}$. If the adversary applies $S, \ldots, S$ to $\gamma$, this will with probability $> \tau$ result in a new configuration belonging to $Z_1^{k-1}$.

We first consider the case where no output bits have yet been written in $\sigma$ or $\gamma$.
By the forgetful and fully communicative properties of the algorithm, the distributions of the configurations resulting from applying $S, \ldots, S$ to $\sigma$ and to $\gamma$ only differ in portions of the local state that can no longer affect behavior of the processors going forward. This is because the new messages to be sent by all $n$ processors will only depend on the input bits and the newly received $n-t$ messages, not the prior portions of the state that differed between $\gamma$ and $\sigma$. Hence, the product distribution induced by applying $S, \ldots, S$ to $\gamma$ places weight $> \tau$ of each of two sets, $Z_0^{k-1}$ and $Z_1^{k-1}$, that are separated by a Hamming distance $ > t$.
Applying Lemma \ref{lem:Talagrand}, we thus have that
\[\tau^2 < e^{-\frac{t^2}{4n}} \Leftrightarrow \tau  < e^{-\frac{t^2}{8n}}.\]
This contradicts our stipulation on the value of $\tau$.

In the case that an output bit has already been written as 0 in $\sigma$, say, then we reach a contradiction by repeatedly applying $S, \ldots, S$ to $\sigma$. Since $\gamma \in Z_1^k$, there is a nonzero probability that this results in a decision of 1 by processors outside of the first $t$, since these are unaffected by the first $t$ processor states when these processors are no longer heard.
\end{proof}

\begin{lemma}\label{lem:avoid2} Suppose a fully communicative and forgetful algorithm $\mathcal{A}$ satisfies measure one correctness and termination and $\tau := e^{-\frac{t^2}{8n}}$. Then, for any reachable configuration $\sigma$ not in $Z_0^k \cup Z_1^k$, there exist sets $S_1, \ldots, S_n$ that can be applied to $\sigma$ such that the resulting reachable configuration falls outside $Z_0^{k-1} \cup Z_1^{k-1}$ with probability $\geq 1 - 2e^{-\frac{(t-1)^2}{8n}}$.
\end{lemma}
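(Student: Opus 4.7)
The plan is to follow the proof of Lemma \ref{lem:avoid} essentially verbatim, with the simplification that there are no reset sets to interpolate over; only the delivery sets $S_i$ participate. From $\sigma \notin Z_0^k$ I extract a witness $S$ with $|S| \geq n-t$ such that applying $S, S, \ldots, S$ to $\sigma$ places probability $\leq \tau$ on $Z_0^{k-1}$, and from $\sigma \notin Z_1^k$ an analogous $S'$ for $Z_1^{k-1}$. For each $j \in \{0, 1, \ldots, n\}$ I set $S_i^j := S$ for $i \leq j$ and $S_i^j := S'$ for $i > j$, yielding a valid assignment of delivery sets (all of size $\geq n-t$) and a distribution $\pi_j$ on the one-window successor of $\sigma$, with $\pi_n$ placing mass $\leq \tau$ on $Z_0^{k-1}$ and $\pi_0$ placing mass $\leq \tau$ on $Z_1^{k-1}$.

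Next I would verify that each $\pi_j$ is a product distribution across the $n$ processor coordinates, and that $\pi_{j^*}$ and $\pi_{j^*-1}$ differ in only one coordinate. Since the sending step from $\sigma$ is deterministic (all outgoing messages are fixed before any randomness is drawn), and each processor's receiving step draws from its own independent random tape, the marginal on processor $i$'s new state depends on $\sigma$ and on $S_i^j$ alone. Hence $\pi_j$ factors across $i$, and advancing $j \mapsto j+1$ perturbs only coordinate $j+1$. The fully communicative hypothesis enters precisely here: it guarantees that at $\sigma$ every processor is in fact poised to send to all $n$ recipients, so that receipt of any $(n-t)$-subset of those messages is a well-defined operation. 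The forgetful hypothesis is not needed for this single-window factorization, but it is what makes the $Z_0^k, Z_1^k$ framework well-defined in the first place via Lemma \ref{lem:induct2}.

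From here the argument is a direct transcription: choose $j^*$ minimal with $\mathbb{P}_{\pi_{j^*}}[Z_0^{k-1}] \leq \eta$ where $\eta := e^{-(t-1)^2/(8n)}$ (such $j^*$ exists since $j = n$ qualifies); dispose of the trivial case $j^* = 0$; otherwise use the single-coordinate perturbation to get $\mathbb{P}_{\pi_{j^*}}[\mathcal{B}(Z_0^{k-1}, 1)] \geq \mathbb{P}_{\pi_{j^*-1}}[Z_0^{k-1}] > \eta$ by resampling coordinate $j^*$; combine with the Hamming separation $\Delta(Z_0^{k-1}, Z_1^{k-1}) > t$ from Lemma \ref{lem:induct2} to see that $\mathcal{B}(Z_0^{k-1}, 1)$ and $\mathcal{B}(Z_1^{k-1}, t-1)$ are disjoint; apply Lemma \ref{lem:Talagrand} to bound $\mathbb{P}_{\pi_{j^*}}[Z_1^{k-1}] \leq \eta$; and finish with a union bound. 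The one nontrivial step is the product-distribution factorization of the middle paragraph, which was essentially free in the strongly adaptive setting from the sending-receiving-resetting structure of an acceptable window but here requires explicit appeal to the fully communicative property and to the fact that logs carried in the state do not couple the randomness across processors within a single window.
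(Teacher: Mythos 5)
Your proof is correct and follows the paper's proof of Lemma~\ref{lem:avoid2} essentially verbatim, which the paper itself presents as a direct transcription of Lemma~\ref{lem:avoid} with the reset sets dropped: same interpolating family $\pi_j$, same choice of minimal $j^*$, same single-coordinate perturbation bound $\mathbb{P}_{\pi_{j^*}}[\mathcal{B}(Z_0^{k-1},1)] \geq \mathbb{P}_{\pi_{j^*-1}}[Z_0^{k-1}]$, same disjointness and Talagrand step. The extra care you take in justifying the product-distribution factorization (sending is deterministic, receiving randomness is drawn per processor from independent tapes, and full communicativity ensures every $(n-t)$-subset of deliveries is well defined) is sound and fills in details the paper leaves implicit, but it is the same argument.
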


\begin{proof} This is essentially the same as the proof of Lemma \ref{lem:avoid}, but we restate the argument for completeness.
Consider a reachable $\sigma$ in the complement of $Z_0^k \cup Z_1^k$. By definition of $Z_0^k$, this means there is some choice of $S$ such that applying $S, \ldots, S$ to $\sigma$ will avoid $Z_0^{k-1}$ with probability $\geq 1 - \tau$. Similarly, by definition of $Z_1^k$, there is some choice $S'$ such that applying $S', \ldots, S'$ to $\sigma$ will avoid $Z_1^{k-1}$ with probability $\geq 1 - \tau$.

We define $S^j_i := S$ for $i \leq j$ and $S^j_i := S'$ for $i > j$. Then, for each $j$,  we can apply $S^j_1, \ldots, S^j_n$ to $\sigma$ to produce a new reachable configuration. For each $j$, this induces a product distribution $\pi_j$ on the set of reachable configurations.

By construction, the distribution $\pi_0$ places probability $\leq \tau$ on $Z_1^{k-1}$ and the distribution $\pi_n$ places probability $\leq \tau$ on $Z_0^{k-1}$. The first $j$ coordinates of $\pi_j$ have the same distributions as in $\pi_n$, while the remaining coordinates have the same distribution as in $\pi_0$. We define $\eta:=e^{-\frac{(t-1)^2}{8n}}$. We let $j^*$ denote the minimal value of $j$ such that $\pi_j$ places probability $\leq \eta$ on $Z_0^{k-1}$. (Such a $j^*$ exists since $j = n$ satisfies this condition.) If $j^* = 0$, then $\pi_0$ then places probability $\leq \eta$ on \emph{each of} $Z_0^{k-1}$ and $Z_1^{k-1}$. Otherwise, we argue as follows.

We use $\mathbb{P}_{\pi_j}(A)$ for a set $A$ to denote the probability that $\pi_j$ places on a set $A$. We claim that:
\begin{equation}\label{eq:ballexpand2}
\mathbb{P}_{\pi_{j^*}}\left[\mathcal{B}\left(Z_0^{k-1},1\right)\right] \geq \mathbb{P}_{\pi_{j^*-1}} \left[ Z_0^{k-1}\right].
\end{equation}
To see this, consider that the product distributions $\pi_{j^*}$ and $\pi_{j^*-1}$ only differ in a single coordinate. Thus, if we sample a configuration according to $\pi_{j^*-1}$ and obtain a result in $Z_0^{k-1}$, we can resample the differing coordinate to match $\pi_{j^*}$ and we are guaranteed to obtain a result in $\mathcal{B}\left(Z_0^{k-1},1\right)$. The inequality (\ref{eq:ballexpand2}) follows.

We observe that the set $\mathcal{B}(Z_1^{k-1}, t-1)$ is disjoint from the set $\mathcal{B}(Z_0^{k-1},1)$, since $\Delta(Z_0^{k-1}, Z_1^{k-1}) > t$.
Hence,
\begin{equation}\label{eq:prob2}
1 - \mathbb{P}_{\pi_{j^*}}\left[\mathcal{B}\left(Z_1^{k-1},t-1\right)\right] \geq \mathbb{P}_{\pi_{j^*}}\left[\mathcal{B}\left(Z_0^{k-1},1\right)\right].
\end{equation}
Combining (\ref{eq:ballexpand2}) and (\ref{eq:prob2}), we see that
\[ \mathbb{P}_{\pi_{j^*}}\left[Z_1^{k-1}\right] \left(1 - \mathbb{P}_{\pi_{j^*}}\left[\mathcal{B}\left(Z_1^{k-1},t-1\right)\right]\right) \geq \eta \mathbb{P}_{\pi_{j^*}} \left[Z_1^{k-1}\right].\]
Applying Lemma \ref{lem:Talagrand}, we have
\[\mathbb{P}_{\pi_{j^*}} \left[Z_1^{k-1}\right] \leq \frac{1}{\eta} e^{-\frac{(t-1)^2}{4n}} = \eta.\]

We now have a product distribution $\pi_{j^*}$ induced by applying a sequence of sets of size $\geq n-t$ that places probability $\leq \eta$ on each set $Z_0^{k-1}, Z_1^{k-1}$, and hence $\mathbb{P}_{\pi_{j^*}}\left[Z_0^{k-1} \cup Z_1^{k-1}\right] \leq 2 \eta$, as required.
\end{proof}

\subsection{Proof of Theorem \ref{thm:main2}}
We define $\alpha, C, E$ as in Section \ref{sec:putTogether} and consider the sets $Z_0^E, Z_1^E$ as defined above. There must be an initial configuration $\delta$ such that $\delta \notin Z_0^E \cup Z_1^E$. We fix this setting of the inputs.

Beginning with $\delta$, the adversary proceeds as follows. It first applies the sequence of sets guaranteed by Lemma \ref{lem:avoid2} in order to yield a $\geq 1 - 2 e^{-\frac{(cn-1)^2}{8n}}$ probability of reaching a new configuration that is not in $Z_0^{E-1}\cup Z_1^{E-1}$. If it succeeds, it can then apply a new sequence of sets (again furnished by Lemma \ref{lem:avoid2}) in order to yield a $\geq 1 - 2e^{-\frac{cn-1)^2}{8n}}$ probability of reaching a new configuration that is not in $Z_0^{E-2}\cup Z_1^{E-2}$, and so on.

The probability that this adversary will succeed in causing $\geq E$ such iterations to occur before any decision is made is at least:
\[1 - 2E e^{-\frac{(cn-1)^2}{8n}} \geq \frac{1}{2},\]
recalling (\ref{eq:expsteps}) and the definition of $E$.
By the fully communicative property, we know that in each iteration, every processor will send a message to every other processor. This guarantees a message chain of length $E$.
This completes our proof of Theorem \ref{thm:main2}.

\bibliographystyle{plain}
\bibliography{BA}

\end{document}